\theoremstyle{plain}
\newtheorem{theorem}{Theorem}
\theoremstyle{definition}
\def\BibTeX{{\rm B\kern-.05em{\sc i\kern-.025em b}\kern-.08em
    T\kern-.1667em\lower.7ex\hbox{E}\kern-.125emX}}
\newcommand\blfootnote[1]{%
  \begingroup
  \renewcommand\thefootnote{}\footnote{#1}%
  \addtocounter{footnote}{-1}%
  \endgroup
}
\begin{document}

\title{Low-Overhead Channel Estimation Framework for Beyond Diagonal Reconfigurable Intelligent Surface Assisted Multi-User MIMO Communication}
\author{\IEEEauthorblockN{Rui Wang, Shuowen Zhang, Bruno Clerckx, and Liang Liu}
\thanks{Rui Wang, Shuowen Zhang, and Liang Liu are with the Department of Electrical and Electronic Engineering, The Hong Kong Polytechnic University, Hong Kong SAR, China (e-mails: rui-eie.wang@connect.polyu.hk, \{shuowen.zhang, liang-eie.liu\}@polyu.edu.hk).}
\thanks{Bruno Clerckx is with the Department of Electrical and Electronic Engineering, Imperial College London, London, U.K. (e-mail: b.clerckx@imperial.ac.uk).}
}
\maketitle

\begin{abstract}
    Beyond diagonal reconfigurable intelligent surface (BD-RIS) refers to a family of RIS architectures characterized by scattering matrices not limited to being diagonal and enables higher wave manipulation flexibility and large performance gains over conventional (diagonal) RIS. To achieve those promising gains, accurate channel state information (CSI) needs to be acquired in BD-RIS assisted communication systems. However, the number of coefficients in the cascaded channels to be estimated in BD-RIS assisted systems is significantly larger than that in conventional RIS assisted systems, because the channels associated with the off-diagonal elements of the scattering matrix have to be estimated as well. Surprisingly, for the first time in the literature, this paper rigorously shows that the uplink channel estimation overhead in BD-RIS assisted systems is actually of the same order as that in the conventional RIS assisted systems. This amazing result stems from a key observation: for each user antenna, its cascaded channel matrix associated with one reference BD-RIS element is a scaled version of that associated with any other BD-RIS element due to the common RIS-base station (BS) channel. In other words, the number of independent unknown variables is far less than it would seem at first glance. Building upon this property, this paper manages to characterize the minimum overhead to perfectly estimate all the channels in the ideal case without noise at the BS, and propose a two-phase estimation framework for the practical case with noise at the BS. Numerical results demonstrate outstanding channel estimation overhead reduction over existing schemes in BD-RIS assisted systems.
\end{abstract}

\begin{IEEEkeywords}
Beyond diagonal reconfigurable intelligent surface (BD-RIS), channel estimation, low-overhead communication. \blfootnote{The materials in this paper have been presented in part at the IEEE International Conference on Wireless Communications and Signal Processing (WCSP), October 2024 \cite{wcsp}.}
\end{IEEEkeywords}

\section{Introduction}

\subsection{Motivation}

Reconfigurable intelligent surfaces (RISs) have emerged as a promising technology for the sixth-generation (6G) cellular network, thanks to their unprecedented capabilities to tune the wireless propagation environment. Conventionally, the scattering matrices of RISs are diagonal matrices such that the RIS elements change the propagation properties independently \cite{Basar2019wireless,jian2022inte}. Recently, a new technology, i.e., beyond diagonal RIS (BD-RIS), has attracted more and more attention \cite{Li2023BDRIS,shen2022Modeling}. Specifically, BD-RIS leverages inter-element connections to generate non-diagonal scattering matrices. This non-diagonal structure enables more flexible wave manipulation, enhanced beamforming, and expanded coverage that are critical for 6G networks \cite{li2023beyond, Li2023BDRIS}. Because of the great potential of BD-RIS and the optimality of BD-RIS to achieve the maximum SNR and capacity of wireless channel \cite{shen2022Modeling,zheyu2025}, research has been conducted on multiple fronts: physics-consistent modeling \cite{shen2022Modeling,Nerini2024universal}, optimal and suboptimal architecture designs \cite{shen2022Modeling,Nerini2024graph,zheyu2025}, transmissive, reflective, hybrid and multi-sector modes \cite{li2023beyond,Li2023mode}, performance optimization and analysis \cite{Nerini2024closed,Santamaria2024MIMO,samy2024diagonal,björnson2025capacity}, hardware impairments \cite{Nerini2023discrete}, and integration with other technologies and applications such as integrated sensing and communication (ISAC) \cite{wang2024radar,raeisi2024localizaiton,chen2024transmitter,guang2024power}, physical layer security \cite{wang2024attacks}, full duplex systems \cite{li2024fullduplex}.

The enhanced performance of BD-RIS depends critically on the availability of channel state information (CSI). Without the BD structure, the minimum channel estimation overhead in conventional RIS assisted systems has been characterized in \cite{wang2020channel}. However, unlike conventional (diagonal) RIS with a single-connected architecture, the interconnected architecture of BD-RIS significantly increases the number of channel coefficients to be estimated. Moreover, due to the circuit requirement, the non-diagonal scattering matrices of the BD-RIS are unitary matrices \cite{li2023beyond,shen2022Modeling}. This imposes constraints since all reflecting elements must be adjusted jointly (unlike conventional RIS with independently tunable elements), thereby complicating the design of time-varying scattering strategies for channel estimation. Due to the above reasons, it is widely believed that the channel estimation overhead in BD-RIS assisted systems is significantly larger than that in conventional RIS assisted systems \cite{li2024channel,li2023channel,de2024channel,dearaujo2024semiblind,sokal2024decoupled,ginige2024prediction}. 

In this work, we consider a fully connected BD-RIS architecture \cite{shen2022Modeling} where each reflecting element is connected to all other elements to form a full non-diagonal scattering matrix. While this architecture provides the highest flexibility for wave manipulation, it inherently incurs the highest channel estimation overhead due to the largest number of interconnections compared to other BD-RIS architectures, e.g., group connected BD-RIS \cite{shen2022Modeling}, tree and forest-connected BD-RIS \cite{Nerini2024graph}, band-connected and stem-connected BD-RIS \cite{zheyu2025}, whose corresponding graphs have fewer edges (hence interconnections) than fully-connected BD-RIS. By addressing the most challenging fully connected BD-RIS architecture, we will establish a theoretical foundation relating channel estimation overhead to BD-RIS architectural complexity and provide insights that help reducing overhead for other BD-RIS architectures.

\subsection{Prior Work}

Under the BD-RIS assisted communication system, most prior works on channel estimation have focused on single-user multiple-input multiple-output (MIMO) scenarios \cite{li2024channel,li2023channel,de2024channel,dearaujo2024semiblind,sokal2024decoupled}. \cite{li2024channel} and \cite{li2023channel} proposed a least square (LS)-based method to derive a closed‑form estimation of the cascaded user‑RIS‑base station (BS) channel and formulated a BD-RIS design problem aimed at minimizing the mean square error (MSE) estimation. \cite{de2024channel} developed two tensor decomposition–based methods. One offers a closed-form solution based on the Khatri-Rao factorization (KRF) algorithm. Given an initial LS estimator of the cascaded channel, it leverages rank-one matrix approximation to extract the individual user-RIS and RIS-BS channels, yielding a noise rejection gain compared to the LS estimation scheme. The other employs an alternating LS (ALS)-based approach to directly decouple the estimation of cascaded channel into individual channel matrices, thereby reducing overhead for estimation. \cite{dearaujo2024semiblind} proposed a semi-blind joint channel and symbol estimation scheme based on tensor modeling of the received signals and a trilinear alternating estimation scheme was derived. For the multi-user scenario, \cite{ginige2024prediction} adapted the ALS method for multiple single‑antenna users by assuming that all users transmit orthogonal pilot sequences such that channels from all users can be simultaneously estimated by the BS. 

However, in a fully connected BD-RIS assisted system with $N$ BS antennas, $M$ BD-RIS elements, and $K$ users each with $U$ antennas, most of the above methods \cite{li2024channel,li2023channel,de2024channel,sokal2024decoupled} require a channel estimation overhead of $KUM^2$, because all the channel coefficients are estimated independently. Although the ALS-based method proposed in \cite{dearaujo2024semiblind,ginige2024prediction} can reduce the channel estimation overhead via 1-mode and 2-mode unfoldings of the received signal tensor, the overhead is still fundamentally high because all the channel coefficients are treated as independent variables. Note that in the conventional RIS assisted systems, it has been shown in \cite{wang2020channel} that the channel estimation overhead is $M+\lceil M(KU-1)/q\rceil$, where $q$ is the rank of RIS-BS channel. An important question motivating this work arises: \textit{Is it possible to reduce the channel estimation overhead in BD-RIS assisted systems to the same order as that in conventional RIS assisted systems?}

\subsection{Main Contributions}

In this paper, we study the channel estimation problem in BD-RIS assisted uplink communication. Specifically, our considered system consists of multiple multi-antenna users, one fully connected BD-RIS, and one multi-antenna BS. Based on the pilot signals from the users, the BS needs to estimate the channels that are necessary for the subsequent BS and BD-RIS beamforming design. We are interested in two questions. First, in the ideal case without noise at the BS, what is the minimum overhead required to perfectly estimate all the channels? Second, in the practical case with noise at the BS, how to design an efficient algorithm to accurately estimate all the channels? The contributions of this paper are summarized as follows.  
\begin{itemize}
    \item We reveal a fundamental correlation property of the cascaded user-RIS-BS channels in BD-RIS assisted systems — for each user, its cascaded channel matrix associated with a pair of BD-RIS element and BS antenna is a scaled version of that associated with any other pair of BD-RIS element and BS antenna, because they share a common RIS-BS channel. This indicates that after the cascaded channel matrix associated with a reference pair of BD-RIS element and BS antenna is estimated, each of the other cascaded channel matrices can be recovered by simply estimating a scaling number. In other words, the number of independent unknown variables in the high-dimension channels in BD-RIS assisted systems is much lower than what we expect. To our best knowledge, our work is the first one to reveal this important property in BD-RIS assisted systems. Previous works \cite{li2024channel,li2023channel,de2024channel,dearaujo2024semiblind,sokal2024decoupled,ginige2024prediction} estimate all cascaded channels independently, leading to larger channel estimation overhead.
    \item Based on the above channel property, this paper shows that in the ideal case without noise at the BS, the overhead to perfectly estimate all the cascaded channels in BD-RIS assisted systems is $2M+\lceil M(KU-1)/q \rceil$, with $q$ being the rank of RIS-BS channel matrix, which is much smaller than $KUM^2$ required by the schemes in previous works without utilizing this channel property. Moreover, the minimum channel estimation overhead in conventional RIS assisted systems is shown to be $M+\lceil M(KU-1)/q\rceil$ in \cite{wang2020channel}. Therefore, the channel estimation overhead in BD-RIS assisted systems is of the same order as that in conventional RIS assisted systems. The main difficulty to characterize the channel estimation overhead is that the received pilot signal depends non-linearly on the product of two variables needed to estimate: the cascaded channel matrix associated with a reference BD-RIS element, and the scaling coefficients associating the cascaded channel matrices of the other BD-RIS elements to this reference element. To tackle this issue, we decouple the non-linear channel estimation problem into two linear problems. First, we propose an efficient BD-RIS control method such that the received pilot signal depends only and linearly on the reference cascaded channel matrix. Second, after the reference cascaded channel matrix is estimated, the received pilot signal is a linear function of the scaling coefficients of the other cascaded channel matrices. The above approach enables us to characterize the minimum overhead to estimate all the cascaded channels based on linear equations.
    \item Moreover, we propose an efficient two-phase framework that can estimate the cascaded channels with high accuracy and low overhead in the practical case with noise at the BS. In Phase I, a reference user transmits pilot signals exclusively through its first antenna. 
    Using designed BD-RIS scattering matrices and pilot signals, we estimate the reference cascaded channel matrix and the scaling coefficients relating the other cascaded channel matrices to this reference. In Phase II, the remaining antennas — both for the reference user and other users — transmit pilot signals to estimate their respective scaling coefficients. Under the above framework, we develop linear minimum mean-squared error (LMMSE)-based estimators to estimate the reference cascaded channel and scaling coefficients. Numerical results demonstrate that given a small number of time instants, our scheme can achieve much lower channel estimation MSE compared to existing algorithms \cite{li2024channel,de2024channel}.
\end{itemize}

\subsection{Organization}

The rest of this paper is organized as follows. Section \ref{sys_model} introduces the system model. Section \ref{problem} states the channel estimation problem without and with noise at the BS, respectively, and reveals the channel property of the cascaded channels under the BD-RIS assisted system. Section \ref{special case} considers the case without noise and characterizes the minimum overhead for perfectly estimating the channels in a special case of a single-antenna user. Section \ref{general case} generalizes the result to a general case of multiple multi-antenna users. Section \ref{with noise} proposes an efficient channel estimation method under the practical case with noise. Section \ref{simulation} provides numerical examples to demonstrate the effectiveness of our proposed scheme. Section \ref{conclude} concludes the paper.

\textit{Notations}: ${\bm I}$ and ${\bm O}$ denote an identify matrix and an all-zero matrix with appropriate dimensions, respectively. For a matrix ${\bm A}$, ${\bm A}^T$ and ${\bm A}^H$ denote its transpose and conjugate transpose, respectively. For a square full-rank matrix ${\bm A}$, ${\bm A}^{-1}$ denotes its inverse. ${\rm vec}(\cdot)$ denotes the vectorization of a matrix, and ${\rm unvec}(\cdot)$ denotes the reverse operation of the vectorization. ${\bm A}(I,:)$ extracts the subset of rows from a matrix ${\bm A}$, where $I$ is a sequence of row-index. $\|\cdot\|_F$ denotes the Frobenius norm. $\lceil\cdot\rceil$ denotes the ceiling function. $\otimes$ denotes the Kronecker product. The distribution of a circularly symmetric complex Gaussian (CSCG) random vector with mean $\bm x$ and covariance matrix $\bm \Sigma$ is denoted by $\mathcal{CN}({\bm x},{\bm\Sigma})$. $\mathbb{E}[\cdot]$ denotes the expectation operator.

\section{System Model}\label{sys_model}

\begin{figure}[t]
    \centering
    \includegraphics[scale=0.1]{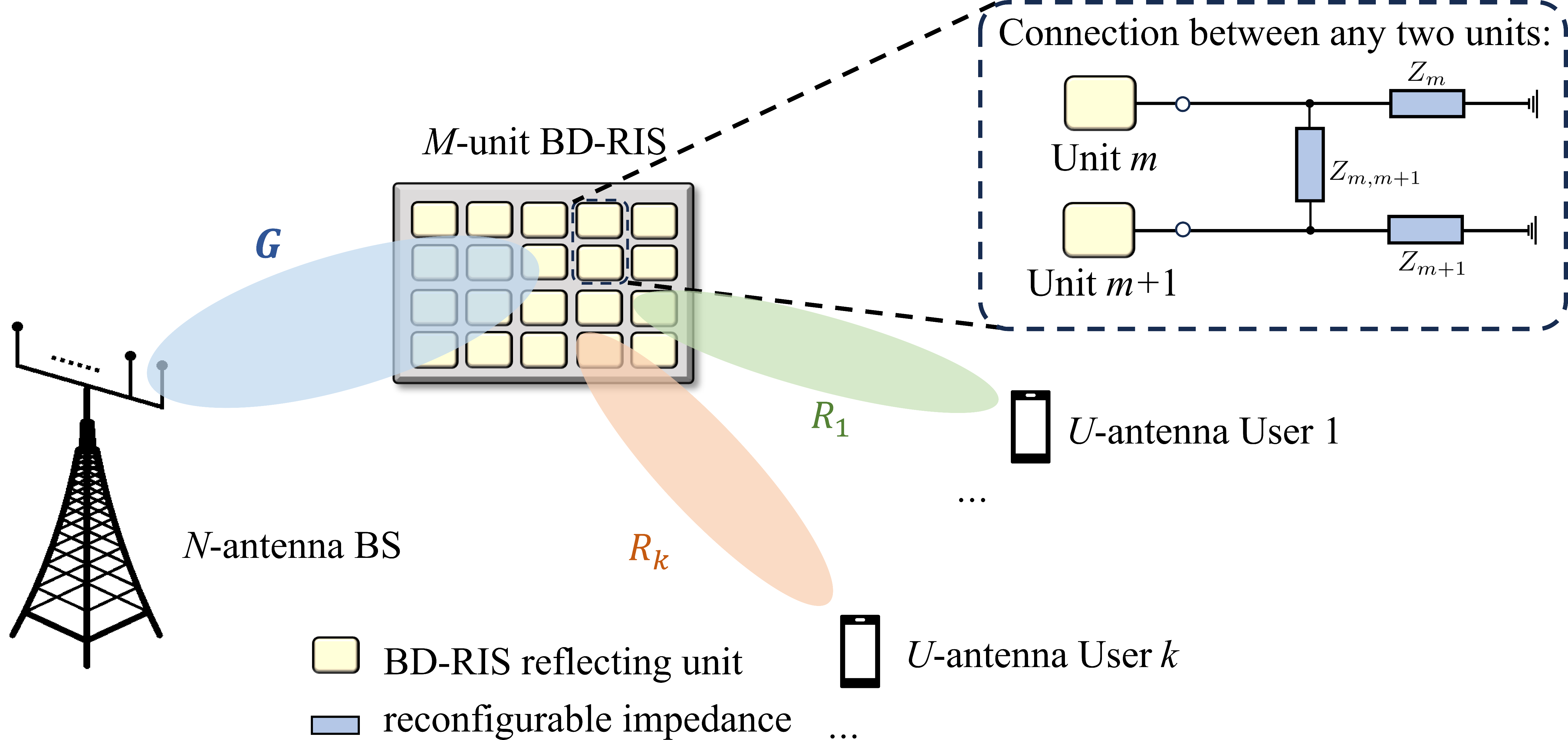}
    \caption{A BD-RIS assisted MU-MIMO uplink communication system.}
    \label{fig:sys_model}
\end{figure}

We consider an uplink communication system consisting of a BS with $N$ antennas, a BD-RIS with $M$ passive reflecting elements, and $K$ users each with $U$ antennas, as shown in Fig. \ref{fig:sys_model}. We assume a quasi-static block fading channel model, where the channels remain approximately constant in each coherence block with $T$ time instants. Define ${\bm D}_{k}\in\mathbb{C}^{N\times U}$ as the direct baseband equivalent channels from user $k$ to the BS. The baseband equivalent channel from the $u$-th antenna of user $k$ to the $m$-th BD-RIS reflecting element and that from the $m$-th BD-RIS element to the BS are denoted by $r_{k,u,m}\in\mathbb{C}$ and ${\bm g}_m\in\mathbb{C}^{N\times 1}$, $k=1,\cdots,K$, $u=1,\cdots,U$, $m=1,\cdots,M$, respectively. Define
\begin{align}\label{Rk}
    {\bm R}_k=\left[
    \begin{array}{ccc}
        r_{k,1,1} & \cdots & r_{k,U,1} \\
        \vdots & \ddots & \vdots \\
        r_{k,1,M} & \cdots & r_{k,U,M}
    \end{array}
    \right]\in\mathbb{C}^{M\times U},~~\forall k,
\end{align}
as the overall channels from user $k$ to the BD-RIS, and ${\bm G}=[{\bm g}_1,\cdots,{\bm g}_M]\in\mathbb{C}^{N\times M}$ as the overall channels from the BD-RIS to the BS. Then, at time instant $t$, the effective uplink channel from user $k$ to the BS through the BD-RIS, resulting from the direct channel and the RIS reflection channel, is expressed as
\begin{align}\label{H_ori}
    {\bm H}_{k,t}&={\bm D}_k+{\bm G}{\bm \Phi}_t{\bm R}_k,~~\forall k,t,
\end{align}
where ${\bm \Phi}_t\in\mathbb{C}^{M\times M}$ denotes the scattering matrix of the BD-RIS at time instant $t$. Since the BD-RIS has a fully connected architecture where each reflecting element is connected to the other elements, ${\bm \Phi}_t$ is a full matrix, where the entry on the $i$-th row and the $j$-th column denotes the reconfigurable coefficient of the inter-connection between the $i$-th and the $j$-th BD-RIS reflecting element. Moreover, ${\bm \Phi}_t$ is a unitary matrix, i.e.,
\begin{align}\label{unitary}
    {\bm \Phi}_t^H{\bm \Phi}_t={\bm \Phi}_t{\bm \Phi}_t^H={\bm I}_M,~~\forall t,
\end{align}
due to the circuit requirement \cite{Li2023BDRIS}. Then, the received signal of the BS at time instant $t$ is expressed as
\begin{align}\label{rev}
    {\bm y}_{t}^{\dagger}&={\sum}_{k=1}^K{\bm H}_{k,t}\sqrt{p}{\bm x}_{k,t}+{\bm n}_t \notag\\ 
    &={\sum}_{k=1}^K\left({\bm D}_k+{\bm G}{\bm \Phi}_t{\bm R}_k\right)\sqrt{p}{\bm x}_{k,t}+{\bm n}_t,~~t=1,\cdots,T,
\end{align}
where $p$ denotes the identical transmit power of all users, ${\bm x}_{k,t}\in\mathbb{C}^{U\times 1}$ denotes the unit-power transmit signal of user $k$ at time instant $t$, and ${\bm n}_t\sim\mathcal{CN}({\bm 0},\sigma^2{\bm I}_N)$ denotes the additive white Gaussian
noise (AWGN) of the BS at time instant $t$.

In this paper, we consider the legacy two-stage transmission protocol for the uplink communications, where each coherence block of length $T$ time instants is divided into the channel estimation stage consisting of $\tau<T$ time instants and data transmission stage consisting of $T-\tau$ time instant. Specifically, in Stage I, each user $k$ transmits a sequence of $\tau$ pilot symbols, i.e., ${\bm x}_{k,t}={\bm a}_{k,t}=[a_{k,1,t},\cdots,a_{k,U,t}]^T$, $k=1,\cdots,K,t=1,\cdots,\tau$, to the BS, where $a_{k,u,t}$ is the pilot symbol transmitted by the $u$-th antenna of user $k$ at time instant $t$. According to \eqref{rev}, at time instant $t$ of Stage I, the received signal of the BS is
\begin{align}\label{rev_S1}
    {\bm y}_{t}^{\dagger}={\sum}_{k=1}^K\left({\bm D}_k+{\bm G}{\bm \Phi}_t{\bm R}_k\right)\sqrt{p}{\bm a}_{k,t}+{\bm n}_t,~~t=1,\cdots,\tau.
\end{align}
The task of the BS is then to perform channel estimation based on ${\bm y}_t^{\dagger}$, $t=1,\cdots,\tau$.

In Stage II, the scattering matrix of the BD-RIS is fixed over different time instants \cite{li2024channel}, i.e., ${\bm \Phi}_{t}=\bar{\bm \Phi}$, $t=\tau+1,\cdots,T$. According to \eqref{rev}, at time instant $t$ of Stage II, the received signal of the BS is 
\begin{align}\label{rev_S2}
    {\bm y}_{t}^{\dagger}=\sum_{k=1}^K\left({\bm D}_k+{\bm G}\bar{\bm \Phi}{\bm R}_k\right)\sqrt{p}{\bm x}_{k,t}+{\bm n}_t,~~t=\tau+1,\cdots,T,
\end{align}
where ${\bm x}_{k,t}\sim\mathcal{CN}({\bm 0},{\bm S}_k)$ denotes the message signal of user $k$ at time instant $t$. Then, the channel capacity of user $k$ is
\begin{align}\label{rate_k}
    C_k=\log_2\det\left({\bm I}_N+{\bm C}_{y_k}\left({\sum}_{j\neq k}{\bm C}_{y_j}+\sigma^2{\bm I}_N\right)^{-1}\right),~~\forall k,
\end{align}
where ${\bm C}_{y_k}=p\bar{\bm H}_k{\bm S}_k\bar{\bm H}_k^H$, with $\bar{\bm H}_k={\bm D}_k+{\bm G}\bar{\bm \Phi}{\bm R}_k$ denoting the effective channel between the BS and user $k$, $\forall k$. By applying the connection between vectorization and Kronecker product \cite{henderson1981vec}, we have
\begin{align}\label{unvec}
    {\bm G}\bar{\bm \Phi}{\bm R}_k={\rm unvec}(
    {\bm J}_k
    \bar{\bm \phi}),~~\forall k,
\end{align}
where $\bar{\bm \phi}={\rm vec}(\bar{\bm \Phi})$ and
\begin{align}\label{Jk}
    {\bm J}_k={\bm R}_k^T\otimes{\bm G}=
    \left[
    \begin{array}{ccc}
        {\bm Q}_{k,1,1} & \cdots & {\bm Q}_{k,1,M}  \\
        \vdots & \ddots & \vdots \\
        {\bm Q}_{k,U,1} & \cdots & {\bm Q}_{k,U,M}
    \end{array}
    \right]
    \in\mathbb{C}^{UN\times M^2},
\end{align}
with 
\begin{align}\label{Q_kum}
    {\bm Q}_{k,u,m}=r_{k,u,m}{\bm G}\in\mathbb{C}^{N\times M},~~\forall k,u,m,
\end{align}
denoting the $(u,m)$-th sub-block of ${\bm J}_k$. Note that in the conventional RIS assisted system, since the scattering matrix $\bar{\bm \Phi}$ is diagonal, it decouples the cascaded channel associated with each user antenna into $M$ independent paths $r_{k,u,1}{\bm g}_1,\cdots,r_{k,u,M}{\bm g}_M$, which constitute each sub-block of ${\bm J}_k$ in \eqref{Jk}, i.e., ${\bm Q}_{k,u,m}'=r_{k,u,m}{\bm g}_m\in\mathbb{C}^{N\times1}$, $\forall k,u,m$. Thus, the total the number of coefficients to estimate in each ${\bm J}_k$ is $UNM$. In contrast, the non-diagonal scattering matrix of BD-RIS introduces inter-element connections, fundamentally complicating the cascaded channel structure. Specifically, each user-RIS channel component $r_{k,u,m}$ is coupled with the whole RIS-BS channel matrix ${\bm G}$ as shown in \eqref{Q_kum}, which requires $UNM^2$ channel coefficients to estimate in each ${\bm J}_k$.

\section{Problem Statement for Channel Estimation}\label{problem}

It is observed from \eqref{rate_k} and \eqref{unvec} that to maximize the network throughput by jointly optimizing the transmit covariance matrices ${\bm S}_k$'s of the users and the scattering matrix $\bar{\bm \Phi}$ at the BD-RIS, the BS has to estimate the user-BS channels ${\bm D}_k$'s and the user-RIS-BS cascaded channels ${\bm J}_k$'s in the channel estimation stage. Note that the direct user-BS channels ${\bm D}_{k}$’s can be obtained via conventional channel estimation techniques by
turning off all the RIS reflecting elements. Therefore, in this paper, we assume that ${\bm D}_k$'s are perfectly known and focus on low-overhead methods to estimate ${\bm J}_k$'s. For simplicity, under the channel estimation stage, let us define the effective received signals by removing the contribution made by the user-BS channels as:
\begin{align}\label{y_t_vec}
    {\bm y}_t&={\bm y}_t^{\dagger}-{\sum}_{k=1}^K{\bm D}_k\sqrt{p}{\bm a}_{k,t}={\sum}_{k=1}^K{\bm G}{\bm \Phi}_t{\bm R}_k\sqrt{p}{\bm a}_{k,t}+{\bm n}_t, \notag\\
    &={\sum}_{k=1}^K\sqrt{p}\left({\bm a}_{k,t}^T\otimes{\bm I}_N\right){\bm J}_k\bar{\bm \phi}_t+{\bm n}_t,~~t=1,\cdots,\tau,
\end{align}
where $\bar{\bm \phi}_t={\rm vec}({\bm \Phi}_t)$. In this paper, we consider the following two problems for estimating ${\bm J}_1,\cdots,{\bm J}_K$ based on ${\bm y}_1,\cdots,{\bm y}_{\tau}$.
\begin{itemize}
    \item Q1: In the ideal case without noise, i.e., ${\bm n}_t={\bm 0}$, $\forall t$, what is the minimum number of time instants, denoted by $\bar{\tau}$, such that ${\bm J}_1,\cdots,{\bm J}_K$ can be perfectly estimated based on ${\bm y}_1,\cdots,{\bm y}_{\bar{\tau}}$?
    \item Q2: In the practical case with noise, how to design efficient algorithms to estimate ${\bm J}_1,\cdots,{\bm J}_K$ given signals received by $\tau\geq \bar{\tau}$ time instants?
\end{itemize}

Because the received signals in \eqref{y_t_vec} are linear functions of ${\bm J}_k$'s, one straightforward way to estimate them is to solve these linear functions. Because there are $KUNM^2$ unknown variables in ${\bm J}_1,\cdots,{\bm J}_K$, theoretically speaking, the minimum number of time instants required by the above estimation method is $KUM^2$. However, the above approach is sub-optimal. This is because under the above approach, we treat the entries in each ${\bm J}_k$ as independent entries. But according to \eqref{Jk} and \eqref{Q_kum}, different sub-blocks of each ${\bm J}_k$ are highly correlated. Specifically, it follows that 
\begin{align}\label{corr}
    {\bm Q}_{k,u,m}=\beta_{k,u,m}{\bm Q}_{1,1,1}, ~~\forall (k,u,m) \neq (1,1,1),
\end{align}
where
\begin{align}\label{beta}
    \beta_{k,u,m}=\frac{r_{k,u,m}}{r_{1,1,1}}.
\end{align}
In other words, after ${\bm Q}_{1,1,1}$ is estimated, it is sufficient to estimate a scalar $\beta_{k,u,m}$ to reconstruct the whole matrix ${\bm Q}_{k,u,m}$, $\forall (k,u,m)\neq(1,1,1)$. To summarize, in ${\bm J}_k$'s, $\forall k$, independent unknown variables are ${\bm Q}_{1,1,1}$ and $\beta_{k,u,m}$'s, $\forall (k,u,m) \neq (1,1,1)$. Thus, the total number of independent unknown variables to estimate in ${\bm J}_k$'s is $NM+KMU-1$. This number is significantly smaller than the total number of unknown variables in ${\bm J}_k$'s, which is $KUNM^2$. This indicates that it is possible to estimate ${\bm J}_k$'s with much lower overhead compared to the existing schemes proposed in \cite{li2024channel} and \cite{de2024channel}, which did not exploit the channel property shown in \eqref{corr}. On the other hand, how to leverage the above channel property to address Q1 and Q2 is challenging. This is because if we treat ${\bm Q}_{1,1,1}$ and ${\beta}_{k,u,m}$'s as unknown variables, then the received signals in \eqref{y_t_vec} are no longer linear functions of these variables, which complicates the the theoretical analysis and algorithm design. In the rest of the paper, we first answer Q1 by considering a special case of a single-antenna user. This simple case can help shed the light on how the channel property shown in \eqref{corr} can significantly reduce the channel estimation overhead. Then, based on the insights from this special case, we generalize the theoretical result of Q1 to the case of multiple multi-antenna users. At last, we will design an efficient channel estimation algorithm to address Q2.

\section{Minimum Overhead in Special Case of A Single-Antenna User}\label{special case}

Let us first focus on Q1 under the special single-user and single-antenna case, i.e., $K=1$ and $U=1$. In this case, we omit the subscripts $k$ and $u$ in the variables defined above. The user-RIS channel reduces to ${\bm r}=[r_1,\cdots,r_M]^T$. Then, the cascaded channel ${\bm J}$ can be expressed as
\begin{align}\label{J_BD}
    {\bm J}&={\bm r}^T\otimes{\bm G}=[{\bm Q}_1,\cdots,{\bm Q}_M],
\end{align}
where
\begin{align}
    {\bm Q}_m=r_m{\bm G}, ~~m=1,\cdots,M,
\end{align}
is the $m$-th sub-block of $\bm J$. Similar to \eqref{corr}, the correlation among ${\bm Q}_m$'s can be modeled as:
\begin{align}\label{corr_SU}
    {\bm Q}_{m}=\beta_{m}{\bm Q}_{1}, ~~m\neq 1,
\end{align}
where
\begin{align}\label{beta_SU}
    \beta_{m}=\frac{r_{m}}{r_{1}},
\end{align}
denotes the scaling coefficient between ${\bm Q}_{m}$ and ${\bm Q}_{1}$. In the ideal case without noise, the received signal in \eqref{y_t_vec} reduces to:
\begin{align}\label{y_t_SU}
    {\bm y}_t &=\sqrt{p}{a}_{t}{\bm G}{\bm \Phi}_t{\bm r}=\sqrt{p}a_t{\bm J}\bar{\bm \phi}_t, ~~t=1,\cdots,\tau.
\end{align}
As discussed in Section \ref{problem}, to reduce the number of estimated variables, we should utilize the channel property shown in \eqref{corr_SU} and \eqref{beta_SU} to replace ${\bm J}$ by ${\bm Q}_1$ and ${\beta}_m$'s, $m=2,\cdots,M$. Then, the received signal given in \eqref{y_t_SU} reduces to 
\begin{align}\label{y_t_SU_Q1}
    {\bm y}_t=\sqrt{p}a_t\left({\bm Q}_1{\bm \phi}_{1,t}+{{\sum}}_{m=2}^M\beta_m{\bm Q}_1{\bm \phi}_{m,t}\right), ~~\forall t,
\end{align}
where ${\bm \phi}_{m,t}$ denotes the $m$-th column of ${\bm \Phi}_t$. As mentioned in Section \ref{problem}, the challenge is that ${\bm y}_t$'s given in \eqref{y_t_SU_Q1} are non-linear functions of ${\bm Q}_1$ and $\beta_m$'s. To tackle this issue, we aim to properly design the BD-RIS scattering matrices ${\bm \Phi}_t$'s and the user pilot signals $a_{t}$'s over time such that ${\bm Q}_1$ can be first estimated and $\beta_m$'s can be then estimated. Note that without the unitary constraint 
\eqref{unitary}, we can set ${\bm \phi}_{m,t}={\bm 0}$, $m=2,\cdots,M$, in the first few time instants, such that the received signal in \eqref{y_t_SU_Q1} is only contributed by the first RIS reflecting element. Therefore, we can estimate ${\bm Q}_1$ by solving linear functions. Then, in the following time instants, we can set ${\bm \phi}_{m,t}\neq{\bm 0}$, $m=2,\cdots,M$. Given ${\bm Q}_1$, the received signals given in \eqref{y_t_SU_Q1} are linear functions of ${\beta}_m$'s, and we can estimate them efficiently. However, for BD-RIS, due to the inter-connected circuits to control the reflecting elements, we cannot design a scattering matrix ${\bm \Phi}_t$ with ${\bm \phi}_{1,t}\neq{\bm 0}$ and ${\bm \phi}_{m,t}={\bm 0}$, $m=2,\cdots,M$, by shutting down the $2$nd to the $M$-th reflecting elements. Such a scattering matrix does not satisfy the unitary constraint \eqref{unitary}. In the following, we propose a new method to tackle the above problem.

Under our proposed approach, we divide the overall $\tau$ time instants into two parts with the same length $\delta$, i.e., $\tau=2\delta$. For each time instant $t\leq\delta$ and time instant $t+\delta$, we set BD-RIS scattering matrices ${\bm \Phi}_t$'s and user pilot signals according to the following rule:
\begin{itemize}
    \item Given the scattering matrix ${\bm \Phi}_t$ at time instant $t\leq\delta$ that satisfies \eqref{unitary}, we set the scattering matrix at time instant $t+\delta$ as
    \begin{align}\label{Phi_tau1T1}
        &{\bm \phi}_{1,\delta+t}=e^{j\theta}{\bm \phi}_{1,t},\notag\\
        &{\bm \phi}_{m,\delta+t}={\bm \phi}_{m,t},~~m=2,\cdots,M,
    \end{align}
    where $j$ is the imaginary unit, and $\theta$ in $(0,2\pi)$ is an arbitrary phase shift. It can be shown that the scattering matrix at time instant $t+\delta$ also satisfies \eqref{unitary}.

    \item Given the user pilot signal $a_t$ at time instant $t\leq\delta$, we set the user pilot signal at time instant $t+\delta$ to be
    \begin{align}\label{pilot_rule}
        a_{\delta+t}=a_t.
    \end{align}
\end{itemize}
Then, the received signals over $\tau=2\delta$ time instants can be re-written as
\begin{align}\label{y_t_delta}
    {\bm y}_t=\sqrt{p}a_t{\bm Q}_{1}{\bm \phi}_{1,t}
    +\sqrt{p}a_t{\sum}_{m=2}^M\beta_m{\bm Q}_{1}{\bm \phi}_{m,t}, 
\end{align}
\begin{align}\label{y_t_t+delta}
    {\bm y}_{\delta+t}=\sqrt{p}a_te^{j\theta}{\bm Q}_{1}{\bm \phi}_{1,t}
    +\sqrt{p}a_t{\sum}_{m=2}^M\beta_m{\bm Q}_{1}{\bm \phi}_{m,t}, \notag\\
    t=1,\cdots,\delta.
\end{align}
Note that in \eqref{y_t_delta} and \eqref{y_t_t+delta}, for each pair of received signals ${\bm y}_t$ and ${\bm y}_{\delta+t}$, $t=1,\cdots,\delta$, only the received signal component contributed by ${\bm Q}_{1}$ is different. As a result, by subtracting ${\bm y}_1,\cdots,{\bm y}_{\delta}$ from ${\bm y}_{\delta+1},\cdots,{\bm y}_{\tau}$, respectively, the signal components contributed by ${\bm Q}_{2}$ to ${\bm Q}_{M}$ can be eliminated, obtaining $\delta$ effective received signals merely contributed by ${\bm Q}_{1}$. The $t$-th effective received signal is 
\begin{align}
    \bar{\bm y}_t={\bm y}_{\delta+t}-{\bm y}_{t}=\sqrt{p}a_t(e^{j\theta}-1){\bm Q}_{1}{\bm \phi}_{1,t}, \notag\\
    t=1,\cdots,\delta.
\end{align}
Then, the overall effective received signal is expressed as
\begin{align}\label{Y1bar}
    \bar{\bm Y}_1=[\bar{\bm y}_{1},\cdots,\bar{\bm y}_{\delta}]=\sqrt{p}(e^{j\theta}-1){\bm Q}_{1}{\bm \Psi}_1,
\end{align}
where ${\bm \Psi}_1=[a_1{\bm \phi}_{1,1},\cdots,a_{\delta}{\bm \phi}_{1,\delta}]\in\mathbb{C}^{M\times\delta}$. $\bar{\bm Y}_1$ is a linear function of ${\bm Q}_1$. Our aim is to estimate ${\bm Q}_{1}$ based on \eqref{Y1bar} with the minimum number of time instants. This requires to find the minimum number of effective received signals $\delta$ such that ${\rm rank}({\bm \Psi}_1)=M$. It can be shown that the minimum value of $\delta$ is
\begin{align}\label{delta_min}
    \bar{\delta}=M.
\end{align}
With this value, we can design $M$ linearly independent $M\times1$ vectors ${\bm \phi}_{1,1},\cdots,{\bm \phi}_{1,M}$, to make sure ${\rm rank}({\bm \Psi}_1)=M$. Then, ${\bm Q}_{1}$ can be perfectly estimated as 
\begin{align}\label{es_Q1_per}
    {\bm Q}_{1}=(\sqrt{p}(e^{j\theta}-1))^{-1}\bar{\bm Y}_1{\bm \Psi}_1^H({\bm \Psi}_1{\bm \Psi}_1^H)^{-1}.
\end{align}

Next, we estimate the scaling coefficients $\beta_{m}$'s. We want to emphasize that after ${\bm Q}_{1}$ is estimated, with the minimum number of $\delta$ given in \eqref{delta_min}, we already have sufficient information contained in the received signals over the $\tau=2M$ time instants as given in \eqref{y_t_delta} and \eqref{y_t_t+delta} to estimate $\beta_2,\cdots,\beta_M$. Specifically, we just focus on the signals received in the first $\bar{\delta}=M$ time instants given in \eqref{y_t_delta}, because those received in the following $M$ time instants given in \eqref{y_t_t+delta} contain the same information about $\beta_m$'s. By removing the signals contributed by the first reflecting element, we obtain the following signals according to \eqref{y_t_delta}:
\begin{align}\label{tilde_y1}
    \tilde{\bm y}_t&={\bm y}_t-\sqrt{p}a_t{\bm Q}_{1}{\bm \phi}_{1,t}=\sqrt{p}a_t{\sum}_{m=2}^M\beta_{m}{\bm Q}_{1}{\bm \phi}_{m,t}, \notag\\
    &={\bm F}_t\bar{\boldsymbol{\beta}},~~t=1,\cdots,M,
\end{align}
where ${\bm F}_t=\sqrt{p}a_t{\bm Q}_{1}\left[{\bm \phi}_{2,t},\cdots,{\bm \phi}_{M,t}\right]\in\mathbb{C}^{N\times(M-1)}, t=1,\cdots,M$, $\bar{\boldsymbol{\beta}}=[\beta_{2},\cdots,\beta_{M}]^T$. Define $\tilde{\bm y}^{(1)}=[\tilde{\bm y}_{1}^T,\cdots,\tilde{\bm y}_{M}^T]^T$. It then follows that
\begin{align}\label{tilde_y^1}
    \tilde{\bm y}^{(1)}={\bm \Theta}_1\bar{\boldsymbol{\beta}},
\end{align}
where
\begin{align}\label{Theta1}
    {\bm \Theta}_1&=\left[{\bm F}_1^T,\cdots,{\bm F}_{M}^T\right]^T\in\mathbb{C}^{MN\times(M-1)}.
\end{align}
The rank of $\bm \Theta_1$ is characterized in the following theorem.
\begin{theorem}\label{theo1}
    Suppose for each time instant $t=1,\cdots,M$, ${\bm \phi}_{2,t},\cdots,{\bm \phi}_{M,t}$ are linearly independent. Then, ${\rm rank}({\bm\Theta}_1)=M-1$.
\end{theorem}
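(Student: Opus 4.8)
The goal is to show that $\bm{\Theta}_1$ has full column rank. Since $\bm{\Theta}_1 \in \mathbb{C}^{MN \times (M-1)}$ has only $M-1$ columns, its rank cannot exceed $M-1$, so it suffices to prove that its columns are linearly independent, i.e. that $\bm{\Theta}_1 \bar{\boldsymbol{\beta}} = \bm{0}$ forces $\bar{\boldsymbol{\beta}} = \bm{0}$. The plan is to translate this vector equation into a statement about the null space of the common RIS-BS channel $\bm{G}$, and then invoke the per-time-instant linear independence hypothesis.

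First I would exploit the block structure. Because $\bm{\Theta}_1 = [\bm{F}_1^T, \cdots, \bm{F}_M^T]^T$ is a vertical stack, $\bm{\Theta}_1 \bar{\boldsymbol{\beta}} = \bm{0}$ holds if and only if $\bm{F}_t \bar{\boldsymbol{\beta}} = \bm{0}$ for every $t = 1, \cdots, M$. Substituting $\bm{F}_t = \sqrt{p}\, a_t \bm{Q}_{1} [\bm{\phi}_{2,t}, \cdots, \bm{\phi}_{M,t}]$ and using that the pilot symbols $a_t$ are nonzero, each of these reduces to $\bm{Q}_{1} \bm{w}_t = \bm{0}$, where $\bm{w}_t = \sum_{m=2}^M \beta_m \bm{\phi}_{m,t}$. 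Since $\bm{Q}_{1} = r_1 \bm{G}$ with $r_1 \neq 0$, this is equivalent to $\bm{w}_t \in \mathrm{null}(\bm{G})$ for all $t$. Thus the null space of $\bm{\Theta}_1$ is exactly the set of $\bar{\boldsymbol{\beta}}$ whose combinations $\bm{w}_t$ land in $\mathrm{null}(\bm{G})$ simultaneously for all $t$, and the theorem amounts to showing this set is trivial.

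The argument is immediate when $\bm{G}$ (hence $\bm{Q}_1$) has full column rank: then $\mathrm{null}(\bm{G}) = \{\bm{0}\}$, so already at $t=1$ we obtain $\sum_{m=2}^M \beta_m \bm{\phi}_{m,1} = \bm{0}$, and the hypothesis that $\bm{\phi}_{2,1}, \cdots, \bm{\phi}_{M,1}$ are linearly independent forces $\bar{\boldsymbol{\beta}} = \bm{0}$. Equivalently, in this regime $\bm{F}_1$ is itself a full-column-rank submatrix (block of rows) of $\bm{\Theta}_1$, so $\mathrm{rank}(\bm{\Theta}_1) \geq \mathrm{rank}(\bm{F}_1) = M-1$, which together with the upper bound closes the proof.

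The hard part is the rank-deficient regime — the physically relevant case $N < M$ — where $\mathrm{null}(\bm{G}) \neq \{\bm{0}\}$ and no single $\bm{F}_t$ can attain rank $M-1$, so that the stacking over all $M$ time instants becomes genuinely necessary. Here I would show that the candidate directions $\{\bar{\boldsymbol{\beta}} : \sum_{m} \beta_m \bm{\phi}_{m,t} \in \mathrm{null}(\bm{G})\}$ intersect trivially as $t$ ranges over $1, \cdots, M$, leaning on the unitarity $\bm{\Phi}_t^H \bm{\Phi}_t = \bm{I}_M$ together with the earlier design requirement that the reference columns $\bm{\phi}_{1,t}$ span $\mathbb{C}^M$, so that the vectors $\bm{w}_t$ cannot all be driven into the low-dimensional subspace $\mathrm{null}(\bm{G})$ unless $\bar{\boldsymbol{\beta}} = \bm{0}$. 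Identifying the precise richness/design condition that makes this intersection trivial — since per-instant independence of $\bm{\phi}_{2,t}, \cdots, \bm{\phi}_{M,t}$ by itself leaves room for degenerate configurations when $\bm{G}$ is rank-deficient — is where I expect the main technical effort to concentrate.
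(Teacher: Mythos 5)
Your setup is correct and your reduction of the claim to ``the vectors $\bm{w}_t=\sum_{m=2}^M\beta_m\bm{\phi}_{m,t}$ cannot all lie in $\mathrm{null}(\bm{G})$ unless $\bar{\bm{\beta}}=\bm{0}$'' is the right way to view the problem; the full-column-rank case is also disposed of correctly. However, there is a genuine gap exactly where you flag it: in the rank-deficient regime $q=\mathrm{rank}(\bm{G})<M-1$ you never establish that the per-instant solution sets intersect trivially, and in fact this cannot be established from the stated hypothesis alone. If one takes $\bm{\Phi}_1=\cdots=\bm{\Phi}_M$ (each of which satisfies the per-instant linear-independence condition, automatically so since the $\bm{\Phi}_t$ are unitary), then every $\bm{F}_t$ is a scalar multiple of $\bm{F}_1$, so $\mathrm{rank}(\bm{\Theta}_1)=\mathrm{rank}\left(\bm{Q}_1[\bm{\phi}_{2,1},\cdots,\bm{\phi}_{M,1}]\right)\le q<M-1$. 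Your suspicion that per-instant independence ``by itself leaves room for degenerate configurations'' is therefore correct, and the missing ingredient is not a cleverer abstract argument but an explicit time-varying design of the scattering matrices.

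That design is precisely what the paper's proof supplies. Writing $\bm{Q}_1=\bm{U\Sigma V}^H$ and left-multiplying each block by $\bm{U}^H$, the paper reduces $\bm{\Theta}_1$ to the stack of the matrices $\bm{\Sigma}\tilde{\bm{\Phi}}_t$ with $\tilde{\bm{\Phi}}_t=\sqrt{p}a_t\bm{V}^H[\bm{\phi}_{2,t},\cdots,\bm{\phi}_{M,t}]$, each of which contributes only $q$ nonzero rows. It then constructs the $\bm{\Phi}_t$'s so that, across successive instants, these contributed rows are drawn from disjoint groups of $q$ rows of a single $M\times M$ unitary matrix $\bm{P}$ (each $\bm{\Phi}_t$ being completed to a unitary matrix via QR decomposition), so that after $\lceil (M-1)/q\rceil\le M$ instants the accumulated rows are jointly linearly independent and $\mathrm{rank}(\bm{\Theta}_1)=\epsilon q+\rho=M-1$, with $\epsilon=\lfloor (M-1)/q\rfloor$ and $\rho=M-1-\epsilon q$. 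In other words, the theorem is really an achievability statement for a particular BD-RIS design rather than a consequence of the per-instant hypothesis alone; to close your proof you would need to import (or reinvent) such a cross-instant construction and verify the joint independence of the contributed rows, which is the step your proposal leaves open.
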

\begin{proof}
    See Appendix \ref{proof1}.    
\end{proof}
According to Theorem \ref{theo1}, the $M-1$ unknown variables in $\bar{\boldsymbol{\beta}}$ can be perfectly estimated based on the $MN$ linear equations in \eqref{tilde_y^1}. The solution is
\begin{align}\label{es_beta_per}
    \bar{\boldsymbol{\beta}}=({\bm \Theta}_1^H{\bm \Theta}_1)^{-1}{\bm \Theta}_1^H\tilde{\bm y}^{(1)}.
\end{align}
To summarize, based on our BD-RIS scattering rule and user pilot rule given in \eqref{Phi_tau1T1} and \eqref{pilot_rule}, we can utilize 
\begin{align}\label{bar_tau1}
    \bar{\tau}=2\bar{\delta}=2M,
\end{align}
time instants to perfectly estimate ${\bm Q}_1$ according to \eqref{es_Q1_per} and $\beta_2,\cdots,\beta_M$ according to \eqref{es_beta_per}. Then, we can recover ${\bm J}$ based on \eqref{corr_SU} for beamforming design. 

Via this special example, we have shown how to utilize the channel property in \eqref{corr_SU} to estimate ${\bm Q}_1$ and $\beta_2,\cdots,\beta_M$, instead of ${\bm J}$ directly as in other works \cite{li2024channel,de2024channel}. This method can reduce the channel estimation overhead from $M^2$ time instants as required by \cite{li2024channel} to $2M$. In the rest of this paper, we first generalize this minimum overhead result to the general case with multiple multi-antenna users in Section \ref{general case}. Then, we will propose an efficient channel estimation algorithm for the practical case with noise in Section \ref{with noise}.

\textit{Remark}: Note that under the conventional RIS scenario, the minimum number of time instants to estimate the cascaded channels for beamforming design in the case of a single-antenna user is $M$ \cite{mishra2019channel}. On one hand, under BD-RIS, our proposed method leads to a channel estimation overhead that is still linear in $M$, same as that in the conventional RIS scenario, although the number of unknown variables in ${\bm J}$ is quadratic in $M$. This is because our scheme can utilize the channel property shown in \eqref{corr_SU} to reduce the number of independent variables to be estimated. On the other hand, the inter-connected circuits to control different reflecting elements raise the channel estimation overhead from $M$ to $2M$. There are two reasons. First, the number of independent unknown variables is increased from $MN$ to $MN+M-1$. Second, because of the unitary constraint on scattering matrix, we have less design flexibility. We need to adopt the design rule in \eqref{Phi_tau1T1} and \eqref{pilot_rule} for creating $M$ pair of received signals, each covering two time instants.

\section{Minimum Overhead in General Case of Multiple Multi-Antenna Users}\label{general case}

In this section, we consider Q1 under the general case of multiple multi-antenna users, i.e., $K>1$ and $U>1$. In the ideal case without noise, the received signal in \eqref{y_t_vec} reduces to
\begin{align}\label{y_t_ideal}
    {\bm y}_t={\sum}_{k=1}^K{\bm G}{\bm \Phi}_t{\bm R}_k\sqrt{p}{\bm a}_{k,t}={\sum}_{k=1}^K\sqrt{p}({\bm a}_{k,t}^T\otimes{\bm I}_N){\bm J}_k\bar{\bm \phi}_t.
\end{align}
Similar to the special case in Section \ref{special case}, to reduce the number of variables to estimate, we should utilize the channel property shown in \eqref{corr} and \eqref{beta} to replace ${\bm J}_k$'s by ${\bm Q}_{1,1,1}$ and ${\beta}_{k,u,m}$'s, $(k,u,m)\neq(1,1,1)$. According to the cascaded channel given in \eqref{Jk} and the channel property in \eqref{corr} and \eqref{beta}, ${\bm J}_k$ can be re-expressed as
\begin{align}\label{Jk_kron}
    {\bm J}_k={\bm B}_k^T\otimes{\bm Q}_{1,1,1}, ~~\forall k,
\end{align}
where 
\begin{align}
    \bm{B}_k=[\boldsymbol{\beta}_{k,1},\cdots,\boldsymbol{\beta}_{k,U}]\in\mathbb{C}^{M\times U},~~\forall k,
\end{align}
with ${\bm \beta}_{k,u}=[\beta_{k,u,1},\cdots,\beta_{k,u,M}]^T$, $\forall u$, and $\beta_{1,1,1}=1$. Then, by substituting \eqref{Jk_kron} into \eqref{y_t_ideal}, the received pilot signal given in \eqref{y_t_ideal} reduces to
\begin{align}\label{y_t_Q111_Bk}
    {\bm y}_t={\sum}_{k=1}^K{\bm Q}_{1,1,1}{\bm \Phi}_t{\bm B}_k\sqrt{p}{\bm a}_{k,t}, ~~
    t=1,\cdots,\tau.
\end{align}
Similar to the special case in Section \ref{special case}, the challenge is that ${\bm y}_t$'s given in \eqref{y_t_Q111_Bk} are non-linear functions of ${\bm Q}_{1,1,1}$ and ${\bm B}_k$'s. Recall that in Section \ref{special case}, we tackled this nonlinearity by first isolating and estimating the reference cascaded channel ${\bm Q}_{1,1,1}$ adopting the design of BD-RIS scattering matrices and pilot signals as \eqref{Phi_tau1T1} and \eqref{pilot_rule}, and then estimating the corresponding scaling coefficients. By mimicking this approach, we extend it to a two-phase channel estimation protocol for the general case of multiple multi-antenna users such that ${\bm Q}_{1,1,1}$ and ${\bm B}_k$'s can be estimated separately and effectively. The overall channel estimation protocol is summarized in Fig. \ref{fig:protocol}, where in Phase I with $\tau_1<\tau$ time instants, the cascaded channel ${\bm Q}_{1,1,1}$ and scaling coefficients $\beta_{1,1,2},\cdots,\beta_{1,1,M}$ are estimated based on ${\bm y}_1,\cdots,{\bm y}_{\tau_1}$ for the $1$st antenna of user $1$; while in Phase II with $\tau_2=\tau-\tau_1$ time instants, the scaling coefficients $\beta_{k,u,m}$'s, $(k,u)\neq(1,1)$, are estimated based on ${\bm y}_{\tau_1+1},\cdots,{\bm y}_{\tau_1+\tau_2}$ for the $2$nd to $U$-th antennas of user $1$ and all antennas of users $2$ to $K$. In the following, we introduce the implementation details of Phase I and Phase II, respectively.

\begin{figure}[t]
    \centering    \includegraphics[width=1\linewidth]{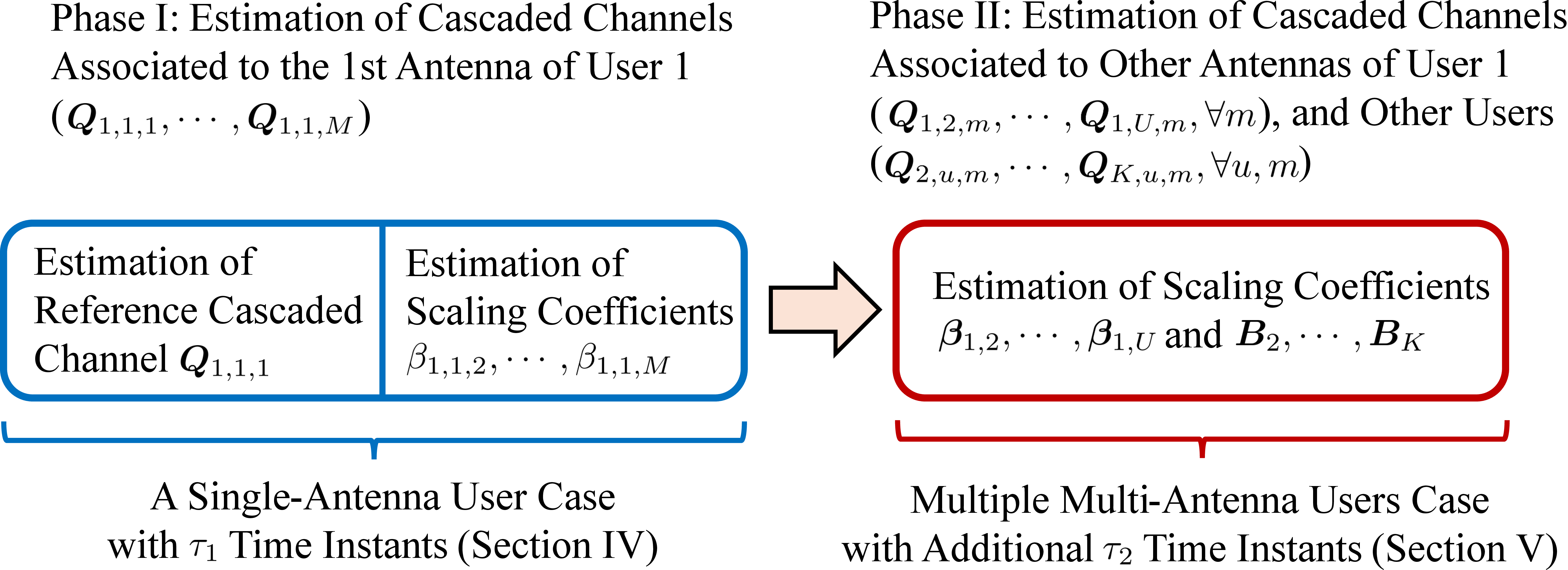}
    \caption{Illustration of the two-phase channel estimation protocol.}
    \label{fig:protocol}
\end{figure}

\subsection{Phase I: Estimation of the Cascaded Channel Associated with the 1st Antenna of User 1}\label{Phase_I}

Specifically, in Phase I at time instants $t=1,\cdots,\tau_1$, we only allow the $1$st antenna of user $1$ to transmit non-zero pilot symbols to the BS, while all the other antennas keep silent, i.e.,
\begin{align}\label{pilot_I}
    a_{k,u,t}=\begin{cases}
        a_t,~~\text{if}~k=1,u=1,\\
        0,~~~\text{otherwise},
    \end{cases}t=1,\cdots,\tau_1.
\end{align}
Then, the received signals of the BS in \eqref{y_t_Q111_Bk} over $\tau_1$ time instants can be re-written as
\begin{align}\label{y_t_P1}
    {\bm y}_t=\sqrt{p}a_t\left({\bm Q}_{1,1,1}{\bm \phi}_{1,t}+{\sum}_{m=2}^M\beta_{1,1,m}{\bm Q}_{1,1,1}{\bm \phi}_{m,t}\right), \notag\\
    t=1,\cdots,\tau_1,
\end{align}
${\bm y}_t$ given in \eqref{y_t_P1} is the same as that in the special single-user and single-antenna case in \eqref{y_t_SU_Q1}. As a result, we can apply the same method as Section \ref{special case} to estimate unknown variables of ${\bm Q}_{1,1,1}$ and $\beta_{1,1,2},\cdots,\beta_{1,1,M}$. Specifically, we divide the overall $\tau_1$ time instants in Phase I into two parts with the same length $\delta$, i.e., $\tau_1=2\delta$, and apply the same scattering matrices ${\bm \Phi}_t$'s as \eqref{Phi_tau1T1} and the same user pilot signal rule of $a_t$'s as \eqref{pilot_rule}. Thus, according to \eqref{bar_tau1} and Theorem \ref{theo1}, we can utilize 
\begin{align}
    \bar{\tau}_1=2M,
\end{align}
time instants to perfectly estimate ${\bm Q}_{1,1,1}$ and $\beta_{1,1,2},\cdots,\beta_{1,1,M}$ based on ${\bm y}_{1},\cdots,{\bm y}_{\tau_1}$ in \eqref{y_t_P1} according to \eqref{es_Q1_per} and \eqref{es_beta_per}, respectively. 

\subsection{Phase II: Estimation of the Scaling Coefficients of the Other Antennas of User 1 and the Other Users}\label{Phase_II}

After ${\bm Q}_{1,1,1}$ and $\beta_{1,1,2},\cdots,\beta_{1,1,M}$ are estimated in Phase I, the remaining unknown variables to estimate reduce to
\begin{align}\label{bar_B}
    \bar{\bm B}=[{\bm \beta}_{1,2},\cdots,{\bm \beta}_{1,U},{\bm B}_2,\cdots,{\bm B}_K]\in\mathbb{C}^{M\times(KU-1)}.
\end{align}
In the following, we introduce how to estimate $\bar{\bm B}$ in Phase II. Specifically, in Phase II at time instants $t=\tau_1+1,\cdots,\tau_1+\tau_2$, we keep the $1$st antenna of user $1$ silent, while the other antennas of user $1$ and all antennas of users $2$ to $K$ transmit non-zero pilot symbols. i.e.,
\begin{align}\label{pilot_II}
    &a_{1,1,t}=0, \notag\\
    &[a_{1,2,t},\cdots,a_{1,U,t}]^T\neq{\bm 0}, \notag\\
    &{\bm a}_{k,t}\neq{\bm 0}, ~k=2,\cdots,K, ~~t=\tau_1+1\,\cdots,\tau_1+\tau_2.
\end{align}
Then, the received signals in \eqref{y_t_Q111_Bk} over $\tau_2$ time instants can be re-written as
\begin{align}\label{y_t_P2}
    {\bm y}_t&={\bm Q}_{1,1,1}{\bm \Phi}_t\bar{\bm B}\sqrt{p}\bar{\bm a}_{t}={\bm T}_t\bar{\bm b},~~t=\tau_1+1,\cdots,\tau_1+\tau_2,
\end{align}
where $\bar{\bm a}_t=[a_{1,2,t},\cdots,a_{1,U,t},{\bm a}_{2,t}^T,\cdots,{\bm a}_{K,t}^T]^T\in\mathbb{C}^{(KU-1)\times 1}$, ${\bm T}_t=\sqrt{p}\bar{\bm a}_{t}^T\otimes{\bm Q}_{1,1,1}{\bm \Phi}_{t}\in\mathbb{C}^{N\times M(KU-1)}$, and $\bar{\bm b}={\rm vec}(\bar{\bm B})$. The overall received signal of the BS over Phase II is given by
\begin{align}\label{y^2}
    {\bm y}^{(2)}&=[{\bm y}_{\tau_1+1}^T,\cdots,{\bm y}_{\tau_1+\tau_2}^T]^T={\bm \Theta}_2\bar{\bm b},
\end{align}
where
\begin{align}\label{Theta2}
    {\bm \Theta}_2=\left[
    {\bm T}_{\tau_1+1}^T,\cdots,{\bm T}_{\tau_1+\tau_2}^T
    \right]^T\in\mathbb{C}^{N\tau_2\times M(KU-1)}.
\end{align}
The minimum value of $\tau_2$ to achieve full-rank of ${\bm \Theta}_2$ is characterized in the following theorem.

\begin{theorem}\label{theo2}
    The minimum value of $\tau_2$ to guarantee full-rank of ${\bm \Theta}_2$ is
    \begin{align}\label{bar_tau2}
        \bar{\tau}_2=\left\lceil\frac{M(KU-1)}{q}\right\rceil,
    \end{align}
    where $q$ is the rank of ${\bm G}$.
\end{theorem}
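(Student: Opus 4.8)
The plan is to prove \eqref{bar_tau2} by establishing a matching converse (lower bound) and achievability (construction), mirroring the two-sided style used for Theorem \ref{theo1}. For the converse, I would first bound the rank of each block. Since $\bm\Phi_t$ is unitary and $\bm Q_{1,1,1}=r_{1,1,1}\bm G$ with a nonzero reference gain $r_{1,1,1}$, the matrix $\bm Q_{1,1,1}\bm\Phi_t$ has rank equal to ${\rm rank}(\bm G)=q$. Because $\bm T_t=\sqrt{p}\,\bar{\bm a}_t^T\otimes(\bm Q_{1,1,1}\bm\Phi_t)$ is a Kronecker product of the rank-one row vector $\bar{\bm a}_t^T$ and a rank-$q$ matrix, ${\rm rank}(\bm T_t)\le q$. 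Stacking $\tau_2$ such blocks and using subadditivity of rank under vertical concatenation gives ${\rm rank}(\bm\Theta_2)\le\tau_2 q$. Full column rank of $\bm\Theta_2$ (needed to solve \eqref{y^2} for $\bar{\bm b}$) requires ${\rm rank}(\bm\Theta_2)=M(KU-1)$, hence $\tau_2 q\ge M(KU-1)$; integrality of $\tau_2$ then forces $\tau_2\ge\lceil M(KU-1)/q\rceil$.

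For achievability I must exhibit pilots and unitary scattering matrices attaining full column rank with $\tau_2=\lceil M(KU-1)/q\rceil$. I would first strip away the common structure via a rank factorization $\bm G=\bm L\bm E$, where $\bm L\in\mathbb{C}^{N\times q}$ has full column rank and $\bm E\in\mathbb{C}^{q\times M}$ has orthonormal rows spanning the row space of $\bm G$. Reindexing the Phase II instants as $s=1,\dots,\tau_2$, and writing $\bm E_s:=\bm E\bm\Phi_{\tau_1+s}$ (which again has orthonormal rows, and whose $q$-dimensional row space can be made arbitrary as $\bm\Phi_{\tau_1+s}$ ranges over unitaries), the Kronecker mixed-product identity yields $\bm T_{\tau_1+s}=\sqrt{p}\,r_{1,1,1}\,\bm L\,(\bar{\bm a}_s^T\otimes\bm E_s)$. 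Factoring out the block-diagonal ${\rm diag}(\bm L,\dots,\bm L)$, which has full column rank $q\tau_2$, reduces the problem to showing that
\begin{align*}
\bm\Omega=[(\bar{\bm a}_1^T\otimes\bm E_1)^T,\dots,(\bar{\bm a}_{\tau_2}^T\otimes\bm E_{\tau_2})^T]^T\in\mathbb{C}^{q\tau_2\times M(KU-1)}
\end{align*}
can be made to have full column rank $M(KU-1)$ by a suitable choice of the pilot vectors $\bar{\bm a}_s\in\mathbb{C}^{KU-1}$ and the orthonormal-row frames $\bm E_s$.

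The main obstacle is precisely this last step: attaining full column rank of $\bm\Omega$ \emph{exactly} at the minimum $\tau_2=\lceil M(KU-1)/q\rceil$. The naive per-stream design — devoting $\lceil M/q\rceil$ instants to each of the $KU-1$ streams with axis-aligned pilots $\bar{\bm a}_s=\bm e_i$ — costs $(KU-1)\lceil M/q\rceil$ instants, which strictly exceeds the minimum whenever $q\nmid M$, because the overshoot of the last subspace is paid once per stream. To hit the minimum, each instant must mix streams (general $\bar{\bm a}_s$) and combine column directions (general $\bm E_s$), so that all $q$ independent rows contributed per instant are spent without waste; the delicacy is that a single rank-one pilot $\bar{\bm a}_s$ cannot by itself disentangle different streams.

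I would resolve this in one of two ways. Constructively, I would order the $M(KU-1)$ column indices $(i,j)$ and the $q\tau_2$ row slots $(s,d)$ and design $(\bar{\bm a}_s,\bm E_s)$ so that a square $M(KU-1)\times M(KU-1)$ submatrix of $\bm\Omega$ becomes triangular with nonzero diagonal, a staircase/tiling in which the single straddling block absorbs the ceiling slack. Alternatively, via a genericity argument: regarding the entries of $\bar{\bm a}_s$ and $\bm E_s$ as indeterminates, ${\rm rank}(\bm\Omega)\le M(KU-1)$ always, and the full-column-rank locus is Zariski-open, so it suffices to exhibit a single feasible assignment (e.g.\ a Vandermonde-type choice) at which some $M(KU-1)\times M(KU-1)$ minor is a nonzero polynomial. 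Finally, I would note the design is implementable: left-multiplying any block by an invertible matrix preserves its row space and hence ${\rm rank}(\bm\Omega)$, so each $\bm E_s$ may be orthonormalized and a valid unitary $\bm\Phi_{\tau_1+s}$ recovered from its row space, completing the achievability and establishing $\bar{\tau}_2=\lceil M(KU-1)/q\rceil$.
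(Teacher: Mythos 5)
Your converse is exactly the paper's: bound ${\rm rank}(\bm T_t)\le q$ via the Kronecker rank identity (the paper works with $\tilde{\bm\Theta}_2=({\bm I}_{\tau_2}\otimes{\bm U}^H){\bm\Theta}_2$ after an SVD of ${\bm Q}_{1,1,1}$, which is your rank factorization in different clothing), stack, and invoke integrality. That half is complete and correct.

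The achievability half is where your write-up stops short, and it stops at precisely the point you yourself flag as the obstacle. Reducing to the full column rank of $\bm\Omega$ is fine, and your observation that the naive per-stream allocation overshoots by $\lceil M/q\rceil(KU-1)-\lceil M(KU-1)/q\rceil$ instants whenever $q\nmid M$ is the right diagnosis. But "design $(\bar{\bm a}_s,\bm E_s)$ so that a square submatrix becomes triangular with nonzero diagonal" is a statement of intent, not a construction, and the delicate case is the straddling instant: because the pilot enters as a rank-one factor, $\bar{\bm a}_s^T\otimes\bm E_s$ necessarily places the \emph{identical} $q\times M$ block into every antenna's column-block that it serves, so a straddling instant cannot be made block-triangular in the naive sense. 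The paper resolves this concretely: it allocates supplies ($q$ rows per instant) to demands ($M$ rows per antenna) via the Northwest Corner Rule, sets the corresponding pilot entries to $1$, and forms each ${\bm\Phi}_t$ by cyclically selecting rows of one fixed unitary ${\bm P}$, so that the repeated block contributed by a straddling instant consists of rows that are new relative to what each of the two antennas it serves has already received; the zero pattern of the staircase then forces all coefficients in a null combination to vanish (see the worked $M=3$, $q=2$ example, where row-block $2$ feeds ${\bm P}([3,1],:)$ to both antennas $1$ and $2$ and still completes both). Your alternative genericity route does not avoid this work: to show the relevant $M(KU-1)\times M(KU-1)$ minor is not the zero polynomial you must still exhibit one witness, which is the construction again. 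So the plan is sound and matches the paper's strategy, but the one step that actually carries the theorem is asserted rather than proved.
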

\begin{proof}
See Appendix \ref{proof2}.
\end{proof}

According to Theorem \ref{theo2}, the $M(KU-1)$ unknown variables in $\bar{\bm B}$ can be 
perfectly estimated based on the $\bar{\tau}_2N$ linear equations in \eqref{y^2}. The solution is
\begin{align}\label{es_barB_per}
    \bar{\bm b}=({\bm \Theta}_2^H{\bm \Theta}_2)^{-1}{\bm \Theta}_2^H{\bm y}^{(2)}.
\end{align}

To summarize, under the case of multiple multi-antenna users, based on \eqref{bar_tau1} and \eqref{bar_tau2}, we can utilize
\begin{align}\label{bar_tau}
    \bar{\tau}=\bar{\tau}_1+\bar{\tau}_2=2M+\left\lceil\frac{M(KU-1)}{q}\right\rceil,
\end{align}
time instants to perfectly estimate ${\bm Q}_{1,1,1}$ according to \eqref{es_Q1_per} and estimate ${\bm B}_1,\cdots,{\bm B}_K$ according to \eqref{es_beta_per} and \eqref{es_barB_per}. Then, we can recover ${\bm J}_k$'s based on \eqref{corr} for beamforming design.

\textit{Remark}: Compared to the method proposed in \cite{li2024channel}, our proposed scheme significantly reduces channel estimation overhead through the channel property indicated in \eqref{corr} from two aspects. First, the channels associated with different RIS reflecting elements for the same single antenna are highly correlated. This correlation allows us to reduce the minimum number of required time instants in Phase I from from $M^2$ to $2M$ in a single-user and single-antenna case. Second, the channels associated with different antennas and users also exhibit significant correlation. As a result, for a case of multiple multi-antenna users with a total of $KU$ antennas, the scheme in \cite{li2024channel} requires $KUM^2$ time instants, while our proposed scheme requires only $2M+\lceil\frac{M(KU-1)}{q}\rceil$ time instants.

\section{Channel Estimation for Case with Noise at the BS}\label{with noise}

In the previous section, we have shown how to perfectly estimate all the cascaded channels using $\bar{\tau}=2M+\lceil\frac{M(KU-1)}{q}\rceil$ time instants for the ideal case without noise at the BS. In this section, we consider Q2 and introduce how to estimate the cascaded channels under our proposed two-phase scheme for the practical case with noise at the BS, using $\tau\geq\bar{\tau}$ time instants.

\subsection{Phase I: Estimation of the Cascaded Channel Associated With the 1st Antenna of User 1}

In Phase I with noise at the BS, with only the $1$st antenna of user $1$ transmitting non-zero pilot symbols as \eqref{pilot_I}, the received signal of the BS given in \eqref{y_t_P1} is re-expressed as 
\begin{align}
    {\bm y}_t=\sqrt{p}a_t\left({\bm Q}_{1,1,1}{\bm \phi}_{1,t}+{\sum}_{m=2}^M\beta_{1,1,m}{\bm Q}_{1,1,1}{\bm \phi}_{m,t}\right)+{\bm n}_t, \notag\\
    t=1,\cdots,\tau_1.
\end{align}
As in Section \ref{Phase_I}, we divide the overall $\tau_1$ time instants into two durations with the same length $\delta$, i.e., $\tau_1=2\delta$. The first $\delta$ time instants consist of two parts: At time instants $t=1,\cdots,\bar{\delta}$, we set the BD-RIS scattering matrices ${\bm \Phi}_t$ and user pilot signals as in Theorem \ref{theo1}; and at time instants $t=\bar{\delta}+1,\cdots,\delta$, we set randomly generated unitary matrices ${\bm \Phi}_t$'s and random user pilot signals, i.e., $a_{t}\sim\mathcal{CN}(0,1)$. Then, for the second $\delta$ time instants $t=\delta+1,\cdots,\tau_1$, we apply the paired rule of BD-RIS scattering matrices and user pilot signals as \eqref{Phi_tau1T1} and \eqref{pilot_rule}, respectively. In this case, the received signal at time instant $t=1,\cdots,\tau_{1}$ in Phase I given in \eqref{y_t_delta} and \eqref{y_t_t+delta} is re-expressed as
\begin{align}
    {\bm y}_t=\sqrt{p}a_t{\bm Q}_{1,1,1}{\bm \phi}_{1,t}
    +\sqrt{p}a_t{\sum}_{m=2}^M\beta_{1,1,m}{\bm Q}_{1,1,1}{\bm \phi}_{m,t}+{\bm n}_t,
\end{align}
\begin{align}
    {\bm y}_{\delta+t}&=\sqrt{p}a_te^{j\theta}{\bm Q}_{1,1,1}{\bm \phi}_{1,t}
    +\sqrt{p}a_t{\sum}_{m=2}^M\beta_{1,1,m}{\bm Q}_{1,1,1}{\bm \phi}_{m,t}\notag\\
    &+{\bm n}_{\delta+t},~~t=1,\cdots,\delta,
\end{align}
respectively, where $\delta>M$. Then by subtracting ${\bm y}_1,\cdots,{\bm y}_{\delta}$ from ${\bm y}_{\delta+1},\cdots,{\bm y}_{\tau_{1}}$, respectively, we can obtain $\delta$ noisy effective received signals as follows:
\begin{align}
    \bar{\bm y}_t={\bm y}_{\delta+t}-{\bm y}_{t}=\sqrt{p}a_t(e^{j\theta}-1){\bm Q}_{1,1,1}{\bm \phi}_{1,t}+{\bm z}_t, \notag\\
    t=1,\cdots,\delta,
\end{align}
where ${\bm z}_t={\bm n}_{\delta+t}-{\bm n}_t$ denotes the effective noise, and ${\bm z}_t\sim\mathcal{CN}({\bm 0},\sigma_z^2{\bm I}_N)$ with $\sigma_z^2=2\sigma^2$. The overall effective received signal over $\tau_{1}$ time instants in Phase I is 
\begin{align}\label{Y1bar_Z}
    \bar{\bm Y}_1=[\bar{\bm y}_{1},\cdots,\bar{\bm y}_{\delta}]=\sqrt{p}(e^{j\theta}-1){\bm Q}_{1,1,1}{\bm \Psi}_1+{\bm Z}_1,
\end{align}
where ${\bm Z}_1=[{\bm z}_1,\cdots,{\bm z}_{\delta}]$. Based on \eqref{Y1bar_Z}, we apply the LMMSE estimator to estimate ${\bm Q}_{1,1,1}$ as follows:
\begin{align}\label{es_Q1}
    &\hat{\bm Q}_{1,1,1} \notag\\
    &=(\sqrt{p}(e^{j\theta}-1))^{-1}\bar{\bm Y}_1\left({\bm \Psi}_1^H{\bm C}_{Q}{\bm \Psi}_1+\sigma_z^2{\bm I}_{\delta} \right)^{-1}{\bm \Psi}_1^H{\bm C}_{Q},
\end{align}
where ${\bm C}_{Q}=\mathbb{E}[{\bm Q}_{1,1,1}^H{\bm Q}_{1,1,1}]$ denotes the covariance matrix of ${\bm Q}_{1,1,1}$.

Next, based on the estimation of ${\bm Q}_{1,1,1}$ given in \eqref{es_Q1}, the noisy version of effective received signal given in \eqref{tilde_y1} is re-expressed as
\begin{align}\label{y_step2_n}
    \tilde{\bm y}_t&={\bm y}_t-\sqrt{p}a_t\hat{\bm Q}_{1,1,1}{\bm \phi}_{1,t}={\bm F}_t\bar{\bm \beta}_{1,1}+{\bm e}_t+{\bm n}_t\notag\\
    &=\hat{\bm F}_t{\bar{\boldsymbol{\beta}}_{1,1}}+({\bm F}_t-\hat{\bm F}_t){\bar{\boldsymbol{\beta}}_{1,1}}+{\bm e}_t+{\bm n}_t, ~~t=1,\cdots,\delta,
\end{align}
where $\bar{\bm \beta}_{1,1}=[\beta_{1,1,2},\cdots,\beta_{1,1,M}]^T$, ${\bm e}_t=\sqrt{p}a_t({\bm Q}_{1,1,1}-\hat{\bm Q}_{1,1,1}){\bm \phi}_{1,t}$, and $\hat{\bm F}_t=\sqrt{p}a_t\hat{\bm Q}_{1,1,1}[{\bm \phi}_{2,t},\cdots,{\bm \phi}_{M,t}]$. Then, the overall effective received signal used for estimating $\bar{\boldsymbol{\beta}}_{1,1}$ is
\begin{align}\label{y^1+n}
    \tilde{\bm y}^{(1)}&=[\tilde{\bm y}_{1}^T,\cdots,\tilde{\bm y}_{{\delta}}^T]^T \notag\\
    &=\hat{\bm\Theta}_1\bar{\boldsymbol{\beta}}_{1,1}+({\bm\Theta}_1-\hat{\bm \Theta}_1)\bar{\boldsymbol{\beta}}_{1,1}+\tilde{\bm e}^{(1)}+\tilde{\bm n}^{(1)},
\end{align}
where $\hat{\bm \Theta}_1=[\hat{\bm F}_1^T,\cdots,\hat{\bm F}_\delta^T]^T$, $\tilde{\bm e}^{(1)}=[{\bm e}_1^T,\cdots,{\bm e}_{\delta}^T]^T$, and $\tilde{\bm n}^{(1)}=[{\bm n}_1^T,\cdots,{\bm n}_{\delta}^T]^T$. In \eqref{y^1+n}, the error propagated from Phase I, i.e., ${\bm \Theta}_1-\hat{\bm \Theta}_1$ and $\tilde{\bm e}^{(1)}$, makes it hard to obtain the LMMSE estimator of $\bar{\boldsymbol{\beta}}_{1,1}$. In practice, we can increase the pilot sequence length in Phase I, i.e, $\tau_1$, such that ${\bm \Theta}_1-\hat{\bm \Theta}_1$ and $\tilde{\bm e}^{(1)}$ are sufficiently small. In this case, we assume that ${\bm \Theta}_1-\hat{\bm \Theta}_1\approx{\bm 0}$ and $\tilde{\bm e}^{(1)}\approx{\bm 0}$. Then, \eqref{y^1+n} reduces to
\begin{align}\label{y^1_approx}
    \tilde{\bm y}^{(1)}\approx\hat{\bm \Theta}_1{\bar{\bm \beta}}_{1,1}+\tilde{\bm n}^{(1)}.
\end{align}
Based on \eqref{y^1_approx}, the LMMSE estimator of ${\bar{\bm \beta}}_{1,1}$ is designed as
\begin{align}\label{es_beta11}
    \hat{\boldsymbol{\beta}}_{1,1}={\bm C}_{\beta_1}\hat{\bm \Theta}_1^H\left(\hat{\bm \Theta}_1{\bm C}_{\beta_1}\hat{\bm \Theta}_1^H+\sigma^2{\bm I}_{N\delta}\right)^{-1}\tilde{\bm y}^{(1)},
\end{align}
where ${\bm C}_{\beta_1}$ denotes the covariance matrix of ${\bar{\bm \beta}}_{1,1}$.

\subsection{Phase II: Estimation of the Scaling Coefficients of the Other Antennas of User 1 and the Other Users}

In Phase II with noise at the BS, with the pilot transmission rule in \eqref{pilot_II} and the estimation of ${\bm Q}_{1,1,1}$ given in \eqref{es_Q1}, the received signal of the BS given in \eqref{y_t_P2} is re-expressed as
\begin{align}
    {\bm y}_t&={\bm T}_t\bar{\bm b}+{\bm n}_t \notag\\
    &=\hat{\bm T}_t\bar{\bm b}+({\bm T}_t-\hat{\bm T}_t)\bar{\bm b}+{\bm n}_t,~~t=\tau_1+1,\cdots,\tau_1+\tau_2,
\end{align}
where $\hat{\bm T}_t=\sqrt{p}\bar{\bm a}_{t}^T\otimes\hat{\bm Q}_{1,1,1}{\bm \Phi}_{t}$. Then, the overall received signal for estimating $\bar{\bm b}$ is re-expressed as
\begin{align}\label{y^2+n}
    {\bm y}^{(2)}={\bm \Theta}_2\bar{\bm b}+{\bm n}^{(2)}=\hat{\bm \Theta}_2\bar{\bm b}+({\bm \Theta}_2-\hat{\bm \Theta}_2)\bar{\bm b}+{\bm n}^{(2)},
\end{align}
where $\hat{\bm \Theta}_2=[\hat{\bm T}_{\tau_1+1}^T,\cdots,\hat{\bm T}_{\tau_1+\tau_2}^T]^T$, and ${\bm n}^{(2)}=[{\bm n}_{\tau_1+1}^T,\cdots,{\bm n}_{\tau_1+\tau_2}^T]^T$. Similar to Phase I, we can increase the pilot sequence length in Phase II, i.e., $\tau_2$, such that the propagated error ${\bm \Theta}_2-\hat{\bm \Theta}_2$ can be sufficiently small. In this case, we assume that ${\bm \Theta}_2-\hat{\bm \Theta}_2\approx{\bm 0}$. Then, \eqref{y^2+n} reduces to
\begin{align}\label{y^2_approx}
    {\bm y}^{(2)}\approx\hat{\bm\Theta}_2\bar{\bm b}+{\bm n}^{(2)}.
\end{align}
To estimate $\bar{\bm b}$ based on \eqref{y^2_approx}, we set the scattering matrices and user pilot signals as follows. At time instants $t=\tau_1+1,\cdots,\tau_1+\bar{\tau}_2$, we set the BD-RIS scattering matrices ${\bm \Phi}_t$ and user pilot signals as in Theorem \ref{theo2}; and at time instants $t=\tau_1+\bar{\tau}_2+1,\cdots,\tau_1+\tau_2$, we set randomly generated unitary matrices ${\bm \Phi}_t$'s and random user pilot signals, i.e., $a_{k,u,t}\sim\mathcal{CN}(0,1)$, $\forall k,u$. Then, based on \eqref{y^2_approx}, the LMMSE estimator of $\bar{\bm b}$ can be designed as
\begin{align}\label{es_beta}
    \hat{{\bm b}}={\bm C}_{b}\hat{\bm \Theta}_2^H\left(\hat{\bm \Theta}_2{\bm C}_{b}\hat{\bm \Theta}_2^H+\sigma^2{\bm I}_{N\tau_2}\right)^{-1}{\bm y}^{(2)},
\end{align}
where ${\bm C}_{b}$ denotes the covariance matrix of ${\bar{\bm b}}$. 

In summary, with the estimation of  $\hat{\bm Q}_{1,1,1}$ given in \eqref{es_Q1} and the estimations of $\beta_{k,u,m}$'s given in \eqref{es_beta11} and \eqref{es_beta}, the cascaded channels associated with the $m$-th BD-RIS reflecting element and the $u$-th antenna of user $k$ can be estimated as
\begin{align}
    &\hat{\bm Q}_{k,u,m}=\hat{\beta}_{k,u,m}\hat{\bm Q}_{1,1,1},~~\forall (k,u,m)\neq(1,1,1),
\end{align}
based on \eqref{corr}.

\section{Numerical Results}\label{simulation}

In this section, we provide numerical results to demonstrate the advantages of our proposed scheme. The channel between the BS and the BD-RIS and that between the $u$-th antenna of user $k$ and the $m$-th BD-RIS reflecting element are modeled as ${\bm G}\sim\mathcal{CN}({\bm 0},\ell^{\rm RB}M{\bm I})$ and $r_{k,u,m}\sim\mathcal{CN}(0,\ell^{\rm UR})$, respectively, where $\ell^{\rm RB}$ and $\ell^{\rm UR}$ denote the pass loss and follow the same model as \cite{wang2020channel}. In this case, the rank of ${\bm G}$ is $q=\min\{M,N\}$. The transmit power of users is $p=33$ dBm. The power spectrum density of the noise at the BS is assumed to be $-169$ dBm/Hz, and the channel bandwidth is $1$ MHz. The normalized mean-squared error (NMSE) is used as the metric to evaluate the performance of channel estimation. Specifically, the overall NMSE for estimating all the users' cascaded channels is defined as
\begin{align}
    {\rm NMSE}=\mathbb{E}\left[\frac{1}{K}{\sum}_{k=1}^K\frac{||\hat{\bm J}_{k}-{\bm J}_{k}||_F^2}{||{\bm J}_{k}||_F^2}\right],
\end{align}
where the overall estimated channel of user $k$, i.e., $\hat{\bm J}_k$, is given as the same form as ${\bm J}_k$ in \eqref{Jk}, with ${\bm Q}_{k,u,m}$ replaced by $\hat{\bm Q}_{k,u,m}$, $\forall k,u,m$.

We first explore the effect of different allocations between the pilot sequence length of two phases under our proposed scheme. Define $\tau_{\rm res}=\tau-\bar{\tau}$ as the length of pilot sequence exceeding the minimum value $\bar{\tau}$ in \eqref{bar_tau}, and $\varrho$ as the proportion of the additional pilot sequence length $\tau_{\rm res}$ allocated to Phase I. In Fig. \ref{fig3}, we set the numbers of BD-RIS elements, antennas at the BS, antennas at the user, and the number of users as $M=8$, $N=4$, $U=2$, and $K=1$, respectively. The total pilot sequence length $\tau$ is set as $30$, $50$, $80$ and $130$, respectively. We show the NMSE performance of our proposed scheme under different values of $\varrho$. It is observed that for each set of total pilot sequence length $\tau$, the NMSE
shows a “first-drop-then-rise”trend. Note that as $\varrho$ increases, ${\bm Q}_{1,1,1}$ can be estimated based on more received signals in Phase I, leading to less estimation error propagated to Phase II, which, however, reduces the number of pilot symbols available for Phase II to estimate $\bar{\bm B}$. When the pilot sequence length in Phase I is small, the estimation error propagated to Phase I is the bottleneck to limit the overall NMSE performance, and it is beneficial to increase $\varrho$. However, when the pilot sequence length in Phase I is sufficiently large, the BS has enough pilot signals to estimate ${\bm Q}_{1,1,1}$, and it is not a good idea to keep increasing $\varrho$ because this will reduce the pilot signals to estimate $\bar{\bm B}$ in Phase II. This indicates that the pilot sequence length allocation should be
carefully designed. In the rest of this section, we always set $\varrho$ such that an optimal NMSE performance can be achieved under our proposed scheme. 

\begin{figure}[t]
    \centering
    \includegraphics[width=1.0\linewidth]{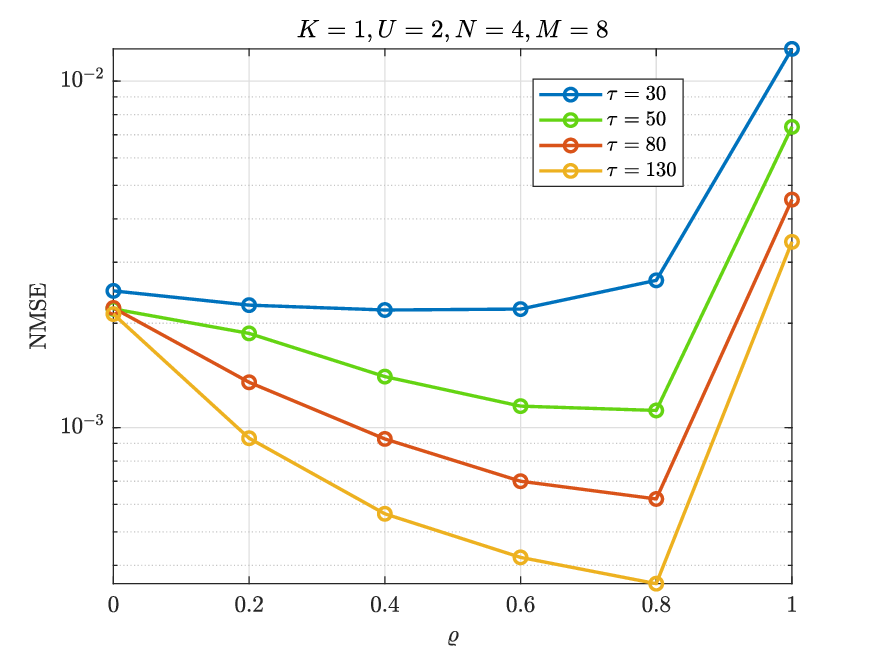}
    \caption{NMSE performance under different pilot sequence allocations.}
    \label{fig3}
\end{figure}

\subsection{Single-User Case}\label{single-user}

In the literature, the channel estimation problem under BD-RIS assisted network has also been studied in \cite{li2024channel} and \cite{de2024channel}. Different from our work which utilizes the hidden channel property given in \eqref{corr}, the above works directly estimate all entries in ${\bm J}_k$'s based on \eqref{y_t_vec}. To show the performance gain of our proposed scheme, we adopt the following benchmark schemes.
\begin{itemize}
    \item \textbf{Benchmark Scheme I}: The LS estimator proposed in \cite{li2024channel}, which directly applies the LS technique to estimate ${\bm J}_k$ based on \eqref{y_t_vec}.
    \item \textbf{Benchmark Scheme II}: The Block Tucker Kronecker factorization (BTKF) algorithm \cite{de2024channel}, which obtains a LS channel estimate via the 3‑mode unfolding of
    the received pilot tensor and then yields decoupled estimates using rank‑one (Kronecker) approximation.
    \item \textbf{Benchmark Scheme III}: The Block Tucker alternating least squares (BTALS) algorithm\cite{de2024channel}, which iteratively refines decoupled channel estimates via alternating LS on the 1‑mode and 2‑mode tensor unfoldings of the received pilot tensor.
\end{itemize}

\begin{figure}[t]
    \centering
    \includegraphics[width=1.0\linewidth]{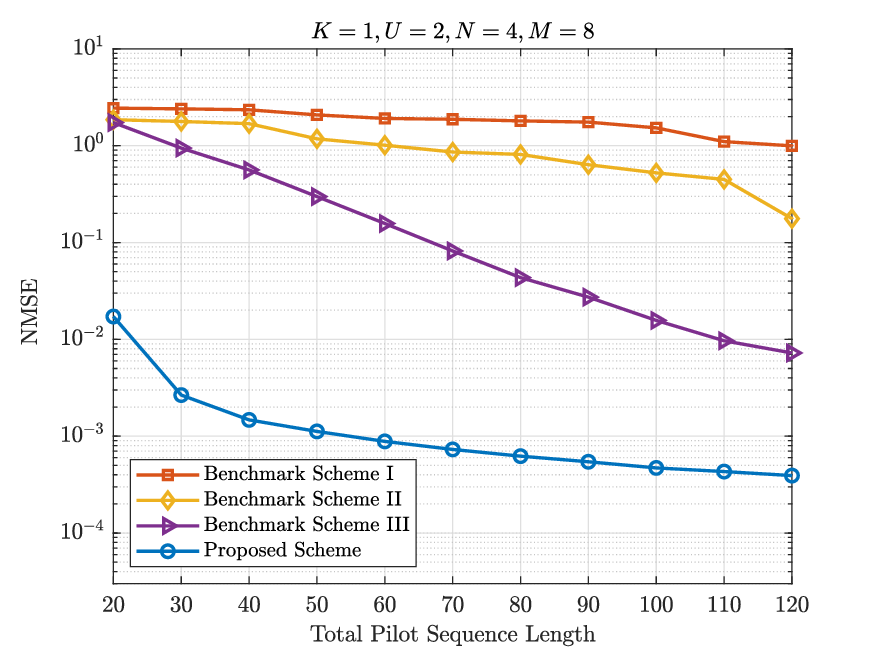}
    \caption{NMSE performance versus total pilot length when $M=8,N=4,U=2$.}
    \label{fig4}
\end{figure}

Figs. \ref{fig4}-\ref{fig6} show the NMSE performance comparison between our proposed scheme and the three benchmark schemes under the single-user case. In Fig. \ref{fig4}, we set the numbers of BD-RIS elements, antennas at the BS and antennas at the user as $M=8$, $N=4$ and $U=2$, respectively, and the total pilot sequence length ranges from $20$ to $120$. It is observed that our proposed scheme shows a significant NMSE performance gain compared to the three benchmark schemes. For Benchmark Scheme I and Benchmark Scheme II, this is because the minimum pilot sequence length required by these two schemes is $128$ while that required by our proposed scheme is $\bar{\tau}=18$ due to the utilization of \eqref{corr}. Note that Benchmark Scheme III achieves a lower NMSE than Benchmark Scheme I and II, due to that it already exploits the tensor decomposition structure of the cascaded channel to reduce estimation overhead. However, Benchmark Scheme III still performs worse than our proposed scheme because it treats all the channel coefficients as independent variables.

\begin{figure}[t]
    \centering
    \includegraphics[width=1.0\linewidth]{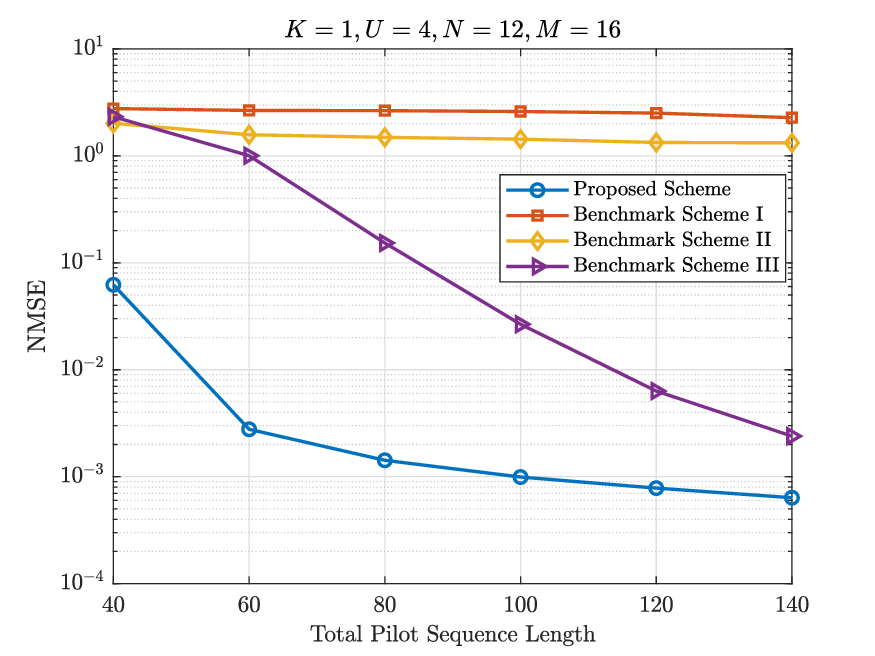}
    \caption{NMSE performance versus number of BD-RIS elements when $M=16,N=12,U=4$.}
    \label{fig5}
\end{figure}

\begin{figure}[t]
    \centering
    \includegraphics[width=1.0\linewidth]{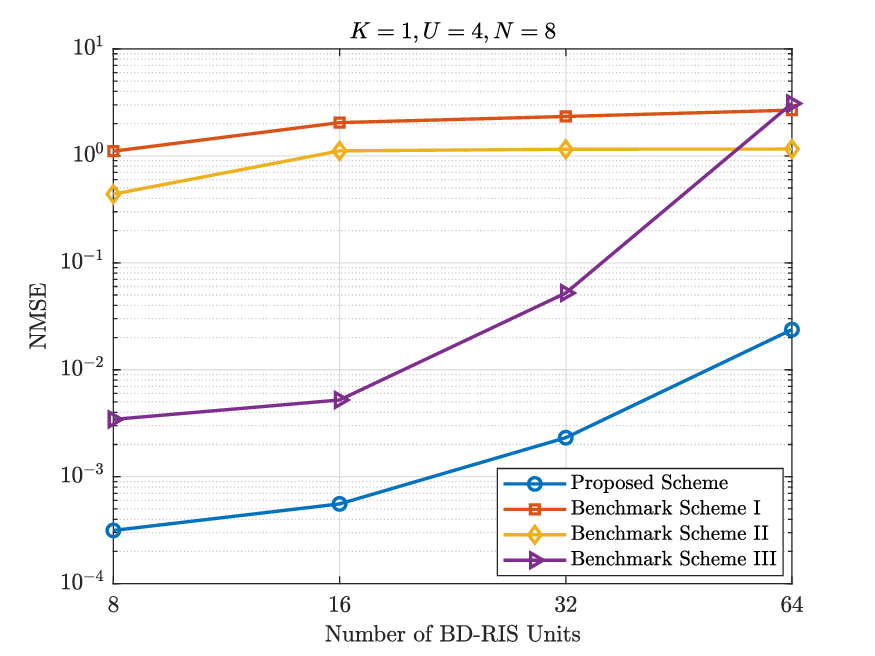}
    \caption{NMSE performance versus number of BD-RIS elements when $N=8, U=4, \tau=160$.}
    \label{fig6}
\end{figure}

Fig. \ref{fig5} shows another numerical example with larger numbers of BD-RIS elements, antennas at the BS and at the user with $M=16, N=12$ and $U=4$. The total pilot sequence length ranges from $40$ to $140$. Observations similar to Fig. \ref{fig4} can be obtained. Under this setup, the minimum pilot sequence length under our proposed scheme is $36$ while that under Benchmark Scheme I and Benchmark Scheme II is $1024$. Thus, these two schemes can hardly work. Moreover, note that in both Fig. \ref{fig4} and \ref{fig5}, the performance gain of our proposed scheme compared to Benchmark Scheme III becomes smaller as the pilot length increases. This is because Benchmark Scheme III exploits an iterative alternating LS strategy that gradually refines the channel estimates and averages noise, especially with an abundance of pilot symbols \cite{kolda2009tensor,cichocki2009nonnegative}. However, this improved performance comes at the expense of higher pilot transmission overhead due to its not exploiting the correlation in \eqref{corr}, which aggravates its disadvantage of high computational complexity caused by iterative nature.

Fig. \ref{fig6} shows the NMSE performance versus the number of the BD-RIS elements, with $N=8$ and $U=4$. The total pilot sequence length is fixed as $160$. It is observed that Benchmark Scheme I and Benchmark Scheme II show a deteriorated NMSE performance, which is caused by the pilot sequence length being insufficient, i.e., less than the minimum required
value. Moreover, when the number of BD-RIS elements increases, the NMSE increases more significantly under Benchmark Scheme III than our proposed scheme. This is attributed to that exploiting \eqref{corr} is more efficient than utilizing tensor decomposition in channel estimation overhead reduction. Specifically, for the single-user and single-antenna case, every additional BD-RIS element increases only $1$ unknown variable to estimate under our proposed scheme, while it causes $N+1$ unknown variables to estimate under Benchmark Scheme III. 

\subsection{Multi-User Case}

In the literature, no work has considered the multi-user and multi-antenna case yet. To show the performance superiority of our proposed scheme, in this subsection, we extend the benchmark schemes in Section \ref{single-user} to the multi-user case, where users transmit pilot signals at orthogonal time instants. The benchmark schemes are as follows.
\begin{itemize}
    \item \textbf{Benchmark Scheme I}: The LS estimator is applied to estimate all users' cascaded channels ${\bm J}_k$'s.
    \item \textbf{Benchmark Scheme II}: The BTKF algorithm is applied to estimate all users' cascaded channels ${\bm J}_k$'s.
    \item \textbf{Benchmark Scheme III}: The BTALS algorithm is applied to estimate all users' cascaded channels ${\bm J}_k$'s.
\end{itemize}

\begin{figure}[t]
    \centering
    \includegraphics[width=1.0\linewidth]{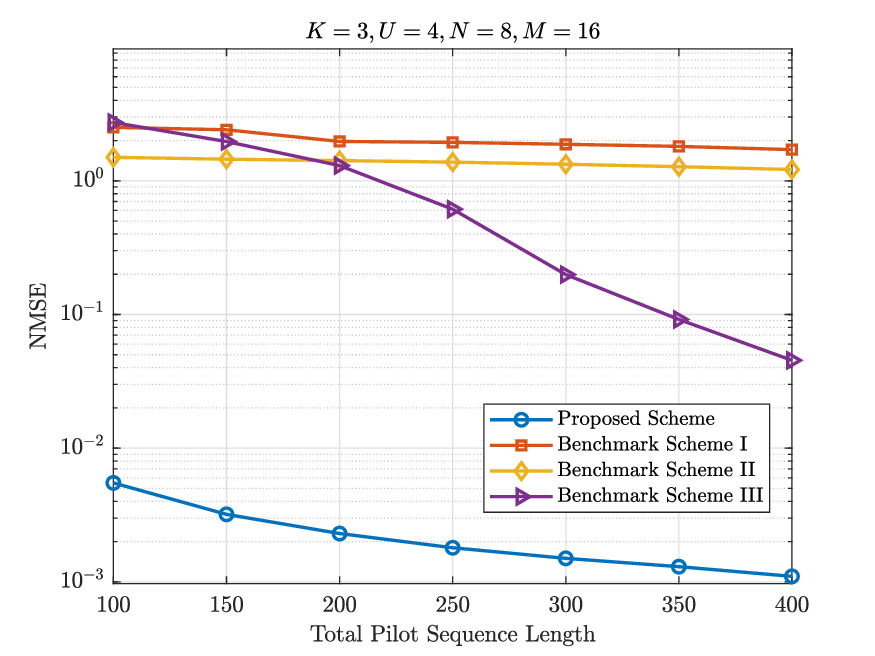}
    \caption{NMSE performance versus total pilot length when $M=16,N=8,U=4,K=3$.}
    \label{fig7}
\end{figure}

\begin{figure}[t]
    \centering
    \includegraphics[width=1.0\linewidth]{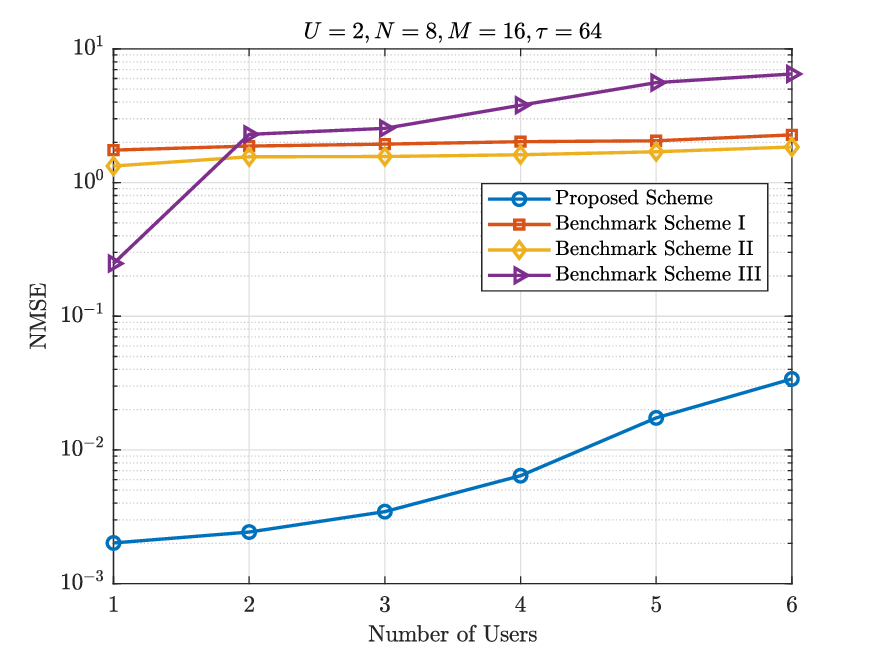}
    \caption{NMSE performance versus number of users when $M=16,N=8,U=2$.}
    \label{fig8}
\end{figure}

In Fig. \ref{fig7}, the numbers of BD-RIS elements, antennas at the BS and users, and users are set as $M=16,N=8,U=4,K=3$, respectively. The total pilot sequence length ranges from $100$ to $400$. It is observed that our proposed scheme shows a significant NMSE performance gain compared to the three benchmark schemes, due to leveraging the channel properties shown in \eqref{corr}. Note that although Benchmark Scheme III achieves a lower NMSE than Benchmark Scheme I and Benchmark Scheme II due to exploiting the built-in tensor decomposition structure
of the cascaded channel, it regards the channels of different users as independent and performs much worse than our proposed scheme in the multi-user case.

Fig. \ref{fig8} shows the NMSE performance versus the number of users, where $M=16,N=8,U=2$, the total pilot length is fixed as $\tau=64$. It is observed that as the number of users increases, the NMSE performance of all three benchmark schemes degrades markedly. In contrast, our proposed scheme maintains robust NMSE performance even with such a limited pilot sequence length, indicating its superiority in the multi-user case.

\section{Conclusions}\label{conclude}

In this paper, we revealed an important channel property for BD-RIS assisted multi-user MIMO communications. Specifically, arising from the common RIS-BS channel, the cascaded channel associated with one pair of BD-RIS element and BS antenna is a scaled version of the associated with any other pair of BD-RIS element and BS antenna. By exploiting this correlation, we addressed two key challenges: quantifying the minimum pilot transmission overhead for perfect channel estimation in noise-free case and proposing practical estimation design in the case with noise. Interestingly, it is theoretically shown that the overhead for channel estimation in BD-RIS assisted systems is in the same order as that in conventional RIS assisted systems, although the number of channel coefficients in these two systems are different by orders of magnitude. Numerical results verified that our approach can estimate channels accurately with significantly reduced overhead compared to existing algorithms.

\begin{appendix}

\subsection{Proof of Theorem \ref{theo1}}\label{proof1}
Denote the rank of ${\bm G}$ as $q$. Then, ${\rm rank}({\bm Q}_{1})={\rm rank}({\bm G})=q$. The singular value decomposition (SVD) of ${\bm Q}_{1}$ is expressed as ${\bm Q}_{1}={\bm{U\Sigma V}}^H$, where $\bm U$ and $\bm V$ are $N\times N$ and $M\times M$ unitary matrix, respectively, and ${\bm \Sigma}$ is a $N\times M$ rectangular diagonal matrix with non-zero singular values sorted in descending order on the main diagonal, i.e., $\sigma_1 \geq \cdots \geq \sigma_q$ . By left-multiplying each ${\bm F}_t$ with  ${\bm U}^H$, a new matrix with the same rank as ${\bm \Theta}_1$ can be obtained as
\begin{align}
    \tilde{\bm \Theta}_1&=\left[({\bm U}^H{\bm F}_1)^T,\cdots,({\bm U}^H{\bm F}_M)^T\right]^T\notag\\
    &=\left[ ({\bm \Sigma}\tilde{\bm \Phi}_1)^T,\cdots,({\bm \Sigma}\tilde{\bm \Phi}_M)^T \right]^T \in\mathbb{C}^{MN\times(M-1)},
\end{align}
where $\tilde{\bm \Phi}_t=\sqrt{p}a_t{\bm V}^H[{\bm \phi}_{2,t},\cdots,{\bm \phi}_{M,t}]$, and ${\bm \Sigma}\tilde{\bm \Phi}_t\in\mathbb{C}^{N\times(M-1)}$ has $q$ non-zero rows, $t=1,\cdots,M$. We can construct ${\bm \Phi}_t$'s as follows. Define $\epsilon=\lfloor\frac{M-1}{q}\rfloor$ and $\rho=M-1-\epsilon q$. The first $q$ rows of ${\bm \Phi}_t$, $t=1,\cdots,\epsilon$, are from the $[(t-1)q+1]$-th to the $tq$-th rows of an arbitrary $M$ by $M$ unitary matrix $\bm P$. Next, the remaining $M-q$ rows are complemented by random vectors, obtaining a matrix ${\bm \Gamma}$. Then, we perform QR decomposition on ${\bm \Gamma}$, obtaining the unitary matrix ${\bm \Phi}_t^T$. The transpose of ${\bm \Phi}_t^T$ is the final matrix ${\bm \Phi}_t$. Last, ${\bm \Phi}_{\epsilon+1}$ is constructed starting with the first $\rho$ rows, which are from the $(\epsilon q+1)$-th to the $(M-1)$-th rows of ${\bm P}$. Then, the remaining rows of ${\bm \Phi}_{\epsilon+1}$ are completed by the same process of constructing ${\bm \Phi}_{t}$'s, $t=1,\cdots,\epsilon$. In this case, since ${\bm \phi}_{2,t},\cdots,{\bm \phi}_{M,t}$ are linearly independent, $[{\bm \phi}_{2,t},\cdots,{\bm \phi}_{M,t}]$ is a matrix with full-rank of $M-1$. Then, $\tilde{\bm \Phi}_t$ is also rank $M-1$ for ${\bm V}$ is a unitary matrix. Thus, the $q$ non-zero rows in ${\bm \Sigma}\tilde{\bm \Phi}_t$'s, $t=1,\cdots,\epsilon$, and the $\rho$ non-zero rows in ${\bm \Sigma}\tilde{\bm \Phi}_{\epsilon+1}$ are linearly independent. Therefore, ${\rm rank}({\bm \Theta}_1)={\rm rank}(\tilde{\bm \Theta}_1)=\epsilon q+\rho=M-1$. Note that the minimum number of time instants to achieve full-rank ${\bm \Theta}_1$ is actually
\begin{align}\label{bar_delta2}
    \bar{\gamma}=\left\lceil\frac{M-1}{q}\right\rceil=\epsilon+1.
\end{align}
Since $M>\bar{\gamma}$, ${\rm rank}({\bm \Theta}_1)=M-1$ already holds.
\end{appendix}

\subsection{Proof of Theorem \ref{theo2}}\label{proof2}
Since ${\bm \Phi}_t$ is a unitary matrix, ${\rm rank}({\bm Q}_{1,1,1}{\bm \Phi}_t)={\rm rank}({\bm Q}_{1,1,1})=q$. Based on the SVD of ${\bm Q}_{1,1,1}$, we have ${\bm Q}_{1,1,1}{\bm \Phi}_t={\bm U}{\bm \Sigma}{\bm V}^H{\bm \Phi}_t$. Define ${\bm \Xi}_{t}={\bm \Sigma}{\bm V}^H{\bm \Phi}_t\in\mathbb{C}^{N\times M}$, $t=\tau_1+1,\cdots,\tau_1+\tau_2$, then ${\bm \Xi}_{t}$ has a rank of $q$ with the first $q$ rows being non-zero. By left-multiplying each ${\bm Q}_{1,1,1}{\bm \Phi}_t$ with ${\bm U}^H$, we can obtain a new matrix with the same rank as ${\bm \Theta}_2$ as
\begin{align}
    \tilde{\bm \Theta}_2&=({\bm I}_{\tau_2}\otimes{\bm U}^H){\bm \Theta}_2=\sqrt{p}\left[
    \begin{array}{c}
        \bar{\bm a}_{\tau_1+1}^T\otimes{\bm \Xi}_{\tau_1+1}  \\
         \cdots \\
         \bar{\bm a}_{\tau_1+\tau_2}^T\otimes{\bm \Xi}_{\tau_1+\tau_2}  
        \end{array}
        \right].
    \end{align}
Based on the rank property of the Kronecker product, ${\rm rank}(\bar{\bm a}_t^T\otimes{\bm \Xi}_t)={\rm rank}(\bar{\bm a}_t^T){\rm rank}({\bm \Xi}_t)=q$. The largest row rank of $\tilde{\bm \Theta}_2$ is thus $\tau_2q$ and the largest column rank is $M(KU-1)$. As a result, as long as $\tau_2\geq\lceil\frac{M(KU-1)}{q}\rceil$, then $\tilde{\bm \Theta}_2$ can achieve full rank of $M(KU-1)$.
    
Next, we show that if $\tau_2=\lceil\frac{M(KU-1)}{q}\rceil$, there always exists a design strategy of the pilot signals $\bar{\bm a}_t$'s and the BD-RIS scattering matrices ${\bm \Phi}_t$'s, $t=\tau_1+1,\cdots,\tau_1+\tau_2$, such that $\tilde{\bm \Theta}_2$ can achieve full rank. In this case, denote $\tilde{U}=KU-1$ and re-write $\bar{\bm a}_t$ as $\bar{\bm a}_t=[\bar{a}_{1,t},\cdots,\bar{a}_{\tilde{U},t}]^T$, then $\tilde{\bm \Theta}_2$ can be equivalently expressed as
\begin{align}\label{Theta_2_Xi}
    \tilde{\bm\Theta}_2 = \left[
    \begin{array}{ccc}
        \bar{a}_{1,\tau_1+1}\bm{\Xi}_{\tau_1+1}  & \cdots &  \bar{a}_{\tilde
           U, \tau_1+1}\bm{\Xi}_{\tau_1+1} \\
           \vdots  & \ddots & \vdots \\
           \bar{a}_{1,\tau_1+\tau_2}\bm{\Xi}_{\tau_1+\tau_2} & \cdots &  \bar{a}_{\tilde
           U,\tau_1+\tau_2}\bm{\Xi}_{\tau_1+\tau_2}
    \end{array} \right].
\end{align}
Based on \eqref{Theta_2_Xi}, $\tilde{\bm \Theta}_2$ is divided into $\tau_2\times\tilde{U}$ blocks, where $\tau_2$ is the number of time instants and $\tilde{U}$ represents the total number of  antennas. Our strategy is to construct an full-rank matrix $\tilde{\bm \Theta}_2$  by assembling carefully designed block matrices ${\bm \Xi}_t$'s.

To achieve this aim, we first build an auxiliary matrix ${\bm \Pi}$ which indicates the positions where $\bar{a}_{u,t}$'s are set to $1$ in $\tilde{\bm \Theta}_2$. The entries of $\bm \Pi$ serve the following purposes:
\begin{itemize}
    \item Row Constraint: Each row (corresponding to a time instant) must be allocated a number of $q$ ``supplies".
    \item Column Constraint: Each column (corresponding to an antenna) requires a number of $M$ ``demands".
\end{itemize}
These constraints ensure that over $\tau_2$ time instants we accumulate a total of $\tau_2q$ linearly independent rows and $M\tilde{U}$ linearly independent columns in $\tilde{\bm \Theta}_2$. The process of building $\bm \Pi$ draws inspiration from the Northwest Corner Rule, a classical method used in transportation problems to generate an initial feasible solution\cite{henderson1981vec}, referred to Algorithm \ref{alg1}. 

\begin{algorithm}[t]
\caption{Auxiliary Matrix Generation via the Northwest Corner Rule}\label{alg1}
\begin{algorithmic}[1]
    \Statex \textbf{Input:} $\tau_2, q, M, \tilde{U}$
    \Statex \textbf{Output:} auxiliary matrix ${\bm \Pi}$
    \State $i \gets 1$, $j \gets 1$
    \State ${\bm \Pi} \gets {\bm O}_{\tau_2\times \tilde{U}}$,  ${\textbf{sup}} \gets q{\bm 1}_{\tau_2\times 1}$, ${\textbf{dem}} \gets M{\bm 1}_{\tilde{U}\times 1}$
    \While{$i \leq \tau_2$ and $j \leq \tilde{U}$}
    \State $allo = \min\{\textbf{sup}_i, \textbf{dem}_j \}$    
    \State ${\bm \Pi}_{i,j}=allo$
    \State $\textbf{sup}_i = \textbf{sup}_i - allo$
    \State $\textbf{dem}_i = \textbf{dem}_i - allo$
    \If{$\textbf{sup}_i$ = 0}
        \State $i=i+1$
    \ElsIf{$\textbf{dem}_i$ = 0}
        \State $j=j+1$
    \EndIf
    \EndWhile
\end{algorithmic}
\end{algorithm}

Once $\bm \Pi$ is constructed, it guides the design of the pilot signals $\bar{\bm a}_t$'s and the block matrices ${\bm \Xi}_t$'s for each time instant $t$. For the pilot signals, we set $\bar{a}_{j,\tau_1+i}=1$ if ${\bm \Pi}_{i,j}\neq 0$ and $\bar{a}_{j,\tau_1+i}=0$ otherwise, $i=1,\cdots,\tau_2$, $j=1,\cdots,\tilde{U}$, where ${\bm \Pi}_{i,j}$ denotes the element in the $i$-th row and $j$-th column of ${\bm \Pi}$. For the block matrices ${\bm \Xi}_t$'s, the basic idea is to select rows from an arbitrary unitary matrix ${\bm P}$ to form each ${\bm \Xi}_t$ in a structured manner dictated by the entries of $\bm \Pi$. The selection process follows these guidelines:
\begin{itemize}
    \item Directional Selection: Starting from ${\bm \Pi}_{1,1}$, we move in two possible directions: 1) Downward: A change in the row index corresponds to advancing to the next time instant $t$; 2) Rightward: A change in the column index indicates that additional rows are to be selected for the same time instant.
    \item Sequential Assignment: For each nonzero entry ${\bm \Pi}_{i,j}$, the value indicates the number of rows to be selected from ${\bm P}$ (sequentially from the first row onward, cycling back to the first row when all rows have been used). These selected rows are used to form ${\bm \Xi}_t$.
\end{itemize}
Along this way, we actually construct a series of sequence based on a primary sequence $S_{M,0}=\{1,2,\cdots,M\}$. For time instant $t$, a circular shifted version of $S_{M,0}$ is obtained as
\begin{align}
    S_{M,t} = {\rm circshift}(S_{M,0}, (t-\tau_1-1)q~{\rm mod}~M),
\end{align}
where $\rm mod$ is the modulo operation. Then, ${\bm \Phi}_t$ is constructed by
\begin{align}
    {\bm \Phi}_t = {\bm P}(S_{M,t},:),~~t=\tau_1+1,\cdots,\tau_1+\tau_2,
\end{align}
and ${\bm \Xi}_{t}={\bm \Sigma}{\bm V}^H{\bm \Phi}_t$. 
To further explain the above procedure, we provide a simple example as follows.

\textit{Example 1:} Consider the case when $M=3$, $N=2$, $q=2$, $K=2$ and $U=2$. In this case, we have $\tilde{U}=3$ and  $\tau_2=5$. Based on Algorithm \ref{alg1}, the auxiliary matrix ${\bm \Pi}$ is built as:
\begin{align}\label{Pi_eg}
    {\bm \Pi}=\left[
    \begin{array}{ccc}
       2 & 0 & 0 \\
       1 & 1 & 0 \\
       0 & 2 & 0 \\
       0 & 0 & 2 \\
       0 & 0 & 1
    \end{array}
    \right]
\end{align}
We can check that the sum of each row in \eqref{Pi_eg} equals $q=2$ and the sum of each column equals $M=3$. Then, we select rows from ${\bm P}$ sequentially while cycling back when necessary. Specifically, in \eqref{Pi_eg}, starting from ${\bm \Pi}_{1,1}=2$, the first two rows of ${\bm P}$, i.e., ${\bm P}(1,:)$ and ${\bm P}(2,:)$ are selected and allocated to ${\bm \Xi}_{\tau_1+1}$. Next, for ${\bm \Pi}_{2,1}=1$, since the row index updates, the next available row, i.e., ${\bm P}(3,:)$, should be allocated to ${\bm \Xi}_{\tau_1+2}$. Then, for ${\bm \Pi}_{2,2}=1$, the sequence cycles back to the first row ${\bm P}(1,:)$. And since the column index updates, ${\bm P}(1,:)$ is also allocated to ${\bm \Xi}_{\tau_1+2}$, which is thus formed by concatenating ${\bm P}(3,:)$ and ${\bm P}(1,:)$. The process continues in this manner, and after the last non-zero entry of ${\bm \Pi}$ is considered, the resulted matrix $\tilde{\bm \Theta}_2$ is as follows:
\begin{align}\label{Theta_2_eg}
    &\tilde{\bm \Theta}_2=\notag\\
    &\left[
    \begin{array}{ccc}
        \bar{\bm \Sigma}{\bm P}^{'}([1,2],:) & {\bm O} & {\bm O}  \\
        \bar{\bm \Sigma}{\bm P}^{'}([3,1],:) & \bar{\bm \Sigma}{\bm P}^{'}([3,1],:) & {\bm O}  \\
        {\bm O} & \bar{\bm \Sigma}{\bm P}^{'}([2,3],:) & {\bm O} \\
        {\bm O} & {\bm O} & \bar{\bm \Sigma}{\bm P}^{'}([1,2],:) \\
        {\bm O} & {\bm O} & \sigma_1{\bm P}^{'}(3,:)        
    \end{array}
    \right],
\end{align}
where $\bar{\bm \Sigma} = {\rm diag}(\sigma_1,\sigma_2)$. It can be observed that: 1) Each of the $\tau_2$ row-blocks has a rank of $q=2$ and all the row-blocks are linearly independent; 2) Each of the $\tilde{U}$ column-blocks has a rank of $M=3$ and all the column-blocks are linearly independent. As a result, $\tilde{\bm \Theta}_2$ has a rank of $M\tilde{U}=9$. 

At last, once $\tilde{\bm \Theta}_2$ is constructed, ${\bm \Theta}_2$ is formed as ${\bm \Theta}_2=({\bm I}_{\tau_2}\otimes{\bm U})\tilde{\bm \Theta}_2$. Therefore, As long as $\tau_2\geq\lceil\frac{M(KU-1)}{q}\rceil$, we can find a proper design of $\bar{\bm a}_t$'s and ${\bm \Phi}_t$'s such that ${\rm rank}({\bm \Theta}_2)={\rm rank}(\tilde{\bm \Theta}_2)=M(KU-1)$. 


\textit{Remark}: In our construction, the total number of supplies available is $\tau_2q$ and the total number of demands required is $M(KU-1)$. The problem of allocating $q$ rows to each time instant (supply constraint) and $M$ columns to each antenna (demand constraint) in $\tilde{\bm \Theta}_2$ is equivalent to a transportation problem. Since the overall supply is at least as large as the overall demand, i.e., $\tau_2q\geq M(KU-1)$, the North West Corner Rule will thus always provide a feasible allocation via generating an auxiliary matrix ${\bm \Pi}$ that guides the assignment of rows from the unitary matrix ${\bm P}$ to construct $\tilde{\bm \Theta}_2$.

\bibliographystyle{IEEEtran}
\bibliography{reference}

\end{document}